\def\UseBibLatex{1}
\providecommand{\IfPrinterVer}[2]{#2}%
\numberwithin{figure}{section}%
\numberwithin{table}{section}%
\numberwithin{equation}{section}%
\newcommand{\SarielComp}[1]{}
\newcommand{\NotSarielComp}[1]{#1}%
\newcommand{\SarielComp}[1]{#1}%
\newcommand{\NotSarielComp}[1]{}%
\providecommand{\BibLatexMode}[1]{}
\providecommand{\BibTexMode}[1]{#1}
  \renewcommand{\BibLatexMode}[1]{}
  \renewcommand{\BibTexMode}[1]{#1}
  \renewcommand{\BibLatexMode}[1]{#1}
  \renewcommand{\BibTexMode}[1]{}
\newcommand{\Of}{\Mh{\mathcal{O}}}%
\newcommand{\BSet}{\Mh{{B}}}%
\newcommand{\EBSet}{\Mh{{B^+}}}%
\theoremstyle{plain}%
\newtheorem{theorem}{Theorem}[section]
\newtheorem{lemma}[theorem]{Lemma}
\theoremstyle{plain}%
\newtheorem*{remark:unnumbered}[FakeCounter]{Remark}%
\newtheorem{remark}[theorem]{Remark}%
\newtheorem{definition}[theorem]{Definition}
\newtheorem*{defn:unnumbered}[FakeCounter]{Definition}
\newcommand{\myqedsymbol}{\rule{2mm}{2mm}}
\theoremstyle{nonumberplain}%
\newtheorem{proof}{Proof:}%
\newcommand{\eps}{\varepsilon}%
\newcommand{\epsB}{\varsigma}
\newcommand{\epsR}{\Mh{\vartheta}}%
\newcommand{\epsP}{\Mh{\rho}}%
\newcommand{\ceil}[1]{\left\lceil {#1} \right\rceil}
\newcommand{\floor}[1]{\left\lfloor {#1} \right\rfloor}
\newcommand{\HLinkShort}[2]{\hyperref[#2]{#1\ref*{#2}}}
\newcommand{\HLink}[2]{\hyperref[#2]{#1~\ref*{#2}}}
\newcommand{\HLinkPage}[2]{\hyperref[#2]{#1~\ref*{#2}%
      $_\text{p\pageref{#2}}$}}
\newcommand{\HLinkPageOnly}[1]{\hyperref[#1]{Page~\refpage*{#1}%
      $_\text{p\pageref{#1}}$}}
\newcommand{\HLinkSuffix}[3]{\hyperref[#2]{#1\ref*{#2}{#3}}}
\newcommand{\HLinkPageSuffix}[3]{\hyperref[#2]{#1\ref*{#2}%
      #3$_\text{p\pageref{#2}}$}}
\newcommand{\figlab}[1]{\label{fig:#1}}
\newcommand{\figref}[1]{\HLink{Figure}{fig:#1}}
\newcommand{\seclab}[1]{\label{sec:#1}}
\newcommand{\secref}[1]{\HLink{Section}{sec:#1}}
\providecommand{\deflab}[1]{\label{def:#1}}
\newcommand{\defref}[1]{\HLink{Definition}{def:#1}}
\newcommand{\defrefY}[2]{\hyperref[def:#2]{#1}}
\newcommand{\lemlab}[1]{\label{lemma:#1}}
\newcommand{\lemref}[1]{\HLink{Lemma}{lemma:#1}}%
\newcommand{\tablab}[1]{\label{table:#1}}%
\newcommand{\tabref}[1]{\HLink{Table}{table:#1}}%
\newcommand{\thmlab}[1]{{\label{theo:#1}}}
\newcommand{\thmref}[1]{\HLink{Theorem}{theo:#1}}
\providecommand{\eqlab}[1]{}%
\renewcommand{\eqlab}[1]{\label{equation:#1}}
\newcommand{\Eqref}[1]{\HLinkSuffix{Eq.~(}{equation:#1}{)}}
\providecommand{\Mh}[1]{{#1}}%
   \renewcommand{\IfPrinterVer}[2]{#1}%
\renewcommand{\Mh}[1]{{\textcolor{red}{#1}}}%
\definecolor{almostblack}{rgb}{0, 0, 0.3}
\providecommand{\emphw}[1]{{\textcolor{almostblack}{\emph{#1}}}}%
\definecolor{blue25emph}{rgb}{0, 0, 11}
\newcommand{\emphic}[2]{%
   \textcolor{blue25emph}{%
      \textbf{\emph{#1}}}%
   \index{#2}}
\newcommand{\emphi}[1]{\emphic{#1}{#1}}%
\providecommand{\si}[1]{#1}
\definecolor{blue25emph}{rgb}{0, 0, 11}
\renewcommand{\emphic}[2]{%
   \textcolor{blue25emph}{%
      \textbf{\emph{#1}}}%
   \index{#2}}
\newcommand{\IntRange}[1]{\mleft[ #1 \mright]}
\newcommand{\IRX}[1]{\IntRange{#1}}%
\newcommand{\IRY}[2]{\left[ #1 \ldots #2 \right]}
\renewcommand{\Re}{\mathbb{R}}%
\newcommand{\NN}{\mathbb{N}}%
\newcommand{\PS}{\Mh{P}}%
\newcommand{\QS}{\Mh{Q}}%
\newcommand{\Graph}{\Mh{G}}%
\newcommand{\VV}{\Mh{V}}%
\newcommand{\VX}[1]{\VV\pth{#1}}%
\newcommand{\Edges}{\Mh{E}}%
\newcommand{\EdgesX}[1]{\Mh{E}\pth{#1}}%
\newcommand{\Set}[2]{\left\{ #1 \;\middle\vert\; #2 \right\}}
\newcommand{\cardin}[1]{\left| {#1} \right|}%
\newcommand{\pth}[2][\!]{\mleft({#2}\mright)}%
\newcommand{\brc}[1]{\left\{ {#1} \right\}}
\newcommand{\dGY}[2]{\Mh{\mathsf{d}}\pth{#1,#2}}%
\newcommand{\dGZ}[3]{\Mh{\mathsf{d}_{#1}}\pth{#2,#3}}%
\newcommand{\dY}[2]{\left\| #1 - #2 \right\|}%
\newcommand{\pp}{\Mh{p}}%
\newcommand{\pq}{\Mh{q}}%
\newcommand{\pz}{\Mh{z}}%
\newcommand{\etal}{\textit{et~al.}\xspace}
\newcommand{\ShadowC}{\Mh{\mathcal{S}}}%
\newcommand{\ShadowY}[2]{\Mh{\ShadowC}\pth{#1,#2}}%
\newcommand{\ShadowLY}[2]{\Mh{\ShadowC_\rightarrow}\pth{#1,#2}}%
\newcommand{\ShadowRY}[2]{\Mh{\ShadowC_\leftarrow}\pth{#1,#2}}%
\providecommand{\qedhere}{}
\renewcommand{\th}{th\xspace}
\newcommand{\pbrcx}[1]{\left[ {#1} \right]}%
\newcommand{\ProbLTR}{\mathbb{P}}%
\newcommand{\Prob}[1]{\mathop{\ProbLTR} \mleft[ #1 \mright]}%
\newcommand{\ProbCond}[2]{\mathop{\ProbLTR}\!\left[%
       #1 \;\middle\vert\; #2 \right]}
\newcommand{\ExChar}{\mathbb{E}}%
\newcommand{\ExSym}{\mathop{\ExChar}}%
\newcommand{\Ex}[2][\!]{\ExSym#1\pbrcx{#2}}
\newcommand{\ExCond}[2]{\ExSym\!\left[%
       #1 \;\middle\vert\; #2 \right]}
\newlist{compactenumA}{enumerate}{5}%
\setlist[compactenumA]{topsep=0pt,itemsep=-1ex,partopsep=1ex,parsep=1ex,%
  label=(\Alph*)}%
\newlist{compactenuma}{enumerate}{5}%
\setlist[compactenuma]{topsep=0pt,itemsep=-1ex,partopsep=1ex,parsep=1ex,%
  label=(\alph*)}%
\newlist{compactenumI}{enumerate}{5}%
\setlist[compactenumI]{topsep=0pt,itemsep=-1ex,partopsep=1ex,parsep=1ex,%
   label=(\Roman*)}%
\newlist{compactenumi}{enumerate}{5}%
\setlist[compactenumi]{topsep=0pt,itemsep=-1ex,partopsep=1ex,parsep=1ex,%
  label=(\roman*)}%
\newlist{compactitem}{itemize}{5}%
\setlist[compactitem]{topsep=0pt,itemsep=-1ex,partopsep=1ex,parsep=1ex,\label=ensuremath{\bullet}}%
\newlength{\savedparindent}
\providecommand{\remove}[1]{}%
\newcommand{\Here}{\typeout{LOCATION: \currfilename\space L\the\inputlineno}\xspace}
\newcommand{\Daniel}{D\'aniel\xspace}%
\newcommand{\Olah}{Ol\'ah\xspace}%
\providecommand{\Mh}[1]{{#1}}%
\newcommand{\ballY}[2]{\mathrm{ball}\pth{#1, #2}}%
\newlength{\ppicX}
\newlength{\ppicY}
\newcommand{\order}{\sigma}
\newcommand{\orderset}{\Pi}
\newcommand{\ordAll}{\orderset^+}%
\newcommand{\Event}{\Mh{\mathcal{E}}}%
\newcommand{\atgen}{\symbol{'100}}
\newcommand{\SarielThanks}[1]{\thanks{Department of Computer Science;
      University of Illinois; 201 N. Goodwin Avenue; Urbana, IL,
      61801, USA; {\tt sariel\atgen{}illinois.edu}; {\tt
         \url{http://sarielhp.org/}.} #1}}
\newcommand{\KevinThanks}[1]{%
   \thanks{%
      Department of Mathematics and Computing Science, TU Eindhoven,
      P.O. Box 513, 5600 MB Eindhoven, The Netherlands. %
      #1}%
}
\newcommand{\OlahThanks}[1]{%
   \thanks{%
      Department of Mathematics and Computing Science, TU Eindhoven,
      P.O. Box 513, 5600 MB Eindhoven, The Netherlands. %
      #1}%
}
\newcommand{\SaveContent}[2]{%
   \expandafter\newcommand{#1}{#2}%
}
\newcommand{\pa}{\Mh{p}}%
\newcommand{\pb}{\Mh{q}}%
\newcommand{\IV}{\Mh{J}}%
\newcommand{\UpY}[2]{#1^{}_{\uparrow {#2}}}
\newcommand{\itop}{\Mh{\xi}}
\newcommand{\SH}{\Mh{\mathcal{S}}}%
\newcommand{\constA}{\Mh{c}}%
\newcommand{\connX}[1]{\Mh{\ell}\pth{#1}}%
\renewcommand{\sp}{\Mh{\alpha}}
\newcommand{\lossY}[2]{\Mh{\lambda}\pth{#1,#2}}
\newcommand{\StwSet}{\Mh{{T}}}%
\newcommand{\compX}[1]{#1^c}
\begin{document}

\title{Sometimes Reliable Spanners of Almost Linear Size}%

\author{%
   Kevin Buchin%
   \KevinThanks{}%
   \and%
   Sariel Har-Peled%
   \SarielThanks{Work on this paper was partially supported by a NSF
      AF awards CCF-1421231 and CCF-1907400.}%
   \and%
   \Daniel \Olah%
   \OlahThanks{Supported by the Netherlands Organisation for
      Scientific Research (NWO) through Gravitation-grant
      NETWORKS-024.002.003.}%
}%

\maketitle

\begin{abstract}
    Reliable (Euclidean) spanners can withstand huge failures, even
    when a linear number of vertices are deleted from the network. In
    case of failures, some of the remaining vertices of a reliable
    spanner may no longer admit the spanner property, but this
    collateral damage is bounded by a fraction of the size of the
    attack. It is known that $\Omega(n\log n)$ edges are needed to
    achieve this strong property, where $n$ is the number of vertices
    in the network, even in one dimension. Constructions of reliable
    geometric~$(1+\eps)$-spanners, for $n$ points in $\Re^d$, are
    known, where the resulting graph has
    $\Of( n \log n \log\!\log^{6}\!n )$ edges.

    Here, we show randomized constructions of smaller size Euclidean
    spanners that have the desired reliability property in expectation
    or with good probability. The new construction is simple, and
    potentially practical -- replacing a hierarchical usage of
    expanders (which renders the previous constructions impractical)
    by a simple skip list like construction.  This results in a
    $1$-spanner, on the line, that has linear number of edges. Using
    this, we present a construction of a reliable spanner in $\Re^d$
    with~$\Of( n \log\!\log^{2}\!n \log\!\log\!\log n )$ edges.
\end{abstract}

\section{Introduction}

\paragraph{Motivation.}

Designing efficient networks (of roads, railways, communications, etc)
is an important problem. Such networks have to be efficient -- that
is, the distance of traveling along them between two points should be
close to their real world distance. The design and study of such
constructions is the task of building good \emph{spanners}. From an
adversarial point of view, a natural task is figuring out what nodes
to take out in order to ``break'' the network -- so that the
connectivity of the network is degraded. For example, this problem was
studied during the cold war in choosing targets to attack by nuclear
weapons \cite{k-bpgsh-20}:
\begin{quote}
    \scalebox{1.4}{\textbf{``}}%
    {\em As applied to warfare, nodal analysis examined a system or
       network with an eye toward identifying the central links—the
       nodes—so that the commanders could allocate their weapons most
       cost-effectively: destroy the nodes, which in some cases would
       take only a few weapons, and the whole system falls
       apart. There were some officers in the SAC Air Room who
       specialized in this sort of analysis...}%
    \scalebox{1.4}{\textbf{''}}
\end{quote}
Here, we study the natural ``dual'' problem of constructing networks
that can withstand widespread attacks/node-failure.

\paragraph{Background.}
Geometric graphs are such that their vertices are points in the
$d$-dimensional Euclidean space $\Re^d$ and edges are straight line
segments. The quality or efficiency of a geometric graph is often
measured in terms of the ratio of shortest path distances and
geometric distances between its vertices. Let $\Graph=(\PS,\Edges)$ be
a geometric graph, where $\PS \subset \Re^d$ is a set of $n$ points
and $\Edges$ is the set of edges. The shortest path distance between
two points $\pa,\pb \in \PS$ in the graph $\Graph$ is denoted by
$\dGZ{\Graph}{p}{q}$ (or just $\dGY{p}{q}$). The graph $\Graph$ is a
$t$-spanner for some constant $t\geq 1$, if
$\dGY{p}{q} \leq t \cdot \dY{p}{q}$ holds for all pairs of points
$\pa,\pb \in \PS$, where $\dY{p}{q}$ stands for the Euclidean distance
of $\pa$ and $\pb$. The spanning ratio, stretch factor, or dilation of
a graph $\Graph$ is the minimum number $t\geq 1$ for which $\Graph$ is
a $t$-spanner. A path between $\pa$ and $\pb$ is a \emph{$t$-path} if
its length is at most $t\cdot \dY{p}{q}$.

Here, we construct spanners that can survive massive failures of
vertices. The most studied notion is fault tolerance
\cite{clns-nds-15,lns-eacft-98,lns-iacft-02,l-nrftg-99,s-ofts-14},
which provides a properly functioning residual graph if there are no
more failures than a predefined parameter $k$.  Formally, a
$t$-spanner is \emphw{$k$-fault tolerant} if the graph remains a
$t$-spanner after any $k$ vertices are deleted.  It is clear, that a
$k$-fault tolerant spanner must have~$\Omega\pth{kn}$ edges to avoid
small degree nodes, which can be isolated by deleting their
neighbors. Therefore, fault tolerant spanners must have quadratic size
to be able to survive a failure of a constant fraction of vertices.

Another notion is robustness \cite{bdms-rgs-13}, which gives more
flexibility by allowing the loss of some additional nodes by not
guaranteeing $t$-paths for them. For a
function~$f:\NN \xrightarrow{} \Re^+$ a $t$-spanner $\Graph$ is
$f$-robust, if for any set of failed points $\BSet$ there is an
extended set $\EBSet$ (that contains $\BSet$) with size at most
$f\pth{\cardin{\BSet}}$ such that the residual graph
$\Graph\setminus\BSet$ has a $t$-path for any pair of
points~$\pa,\pb \in \PS \setminus \EBSet$. The function $f$ controls
the robustness of the graph - the slower the function grows the more
robust the graph is.  For~$\epsR\in(0,1)$, a spanner that is
$f$-robust with~$f(k)=(1+\epsR)k$ is a \emph{$\epsR$-reliable}
spanner~\cite{bho-spda-20}.  This is the strongest form of robustness,
since the dilation can increase beyond $t$ only for a tiny additional
fraction of points. The fraction is relative to the number of failed
vertices and controlled by the parameter $\epsR$.

For the sake of simplicity of exposition, we omit the $\epsR$ and $t$
parameters when they are understood from the context. Thus, in the
following, a \emph{reliable spanner} is a shorthand for a
$\epsR$-reliable $t$-spanner.

Recently, Buchin \etal \cite{bho-spda-20} showed a construction of
reliable $1$-spanners of size $\Of\pth{n\log n}$ in one
dimension, and of reliable $(1+\eps)$-spanners of size
$\Of\pth{n\log n \log\!\log^6\!n }$ in higher dimensions (the constant
in the $\Of$ depends on the dimension, $\eps$, and the reliability
parameter). An alternative construction, with slightly worse bounds,
was given by Bose~\etal~\cite{bcdm-norgm-18}.  Up to polynomial
factors in $\log \log n$, this matches a lower bound of Bose \etal
\cite{bdms-rgs-13}.

\paragraph{Limitations of previous constructions.} %
The construction of Buchin \etal \cite{bho-spda-20} (and also the
construction of Bose \etal \cite{bcdm-norgm-18}) relies on using
expanders to get a $1$-spanner for points on the line, and then
extending it to higher dimensions. The spanner (in one dimension) has
$\Of(n\log n)$ edges. Unfortunately, even in one dimension, such a
reliable spanner requires~$\Omega( n \log n)$ edges, as shown by Bose
\etal \cite{bdms-rgs-13}. Furthermore, the constants involved in these
constructions \cite{bcdm-norgm-18,bho-spda-20} are quite bad, because
of the usage of expanders.  See \tabref{size-summary} for a summary of
the sizes of different constructions (together with the new results).

\begin{table}
    \begin{center}
        \begin{tabular}{*{5}{|c}|}
          \hline
          \hline
          & dim
          & \# edges
          & constants \\%
          \hline
          \hline
          \multicolumn{4}{|l|}{Reliable spanners}
          \\
          \hline
          \hline
          \multirow{2}{*}{Buchin \etal \cite{bho-spda-20}}
          & $d=1$
          & $\Of(n \log n)$
          & $\Of\pth{\epsR^{-6}}$
          \\
          & $\Bigl. d\geq 2$
          & $\Of\pth{n \log n \log\!\log^6\! n}$
          & $\Of\pth{\eps^{-7d} \epsR^{-6} \log^7\!\eps^{-1}}$
          \\
          \hline%
          Bose \etal \cite{bcdm-norgm-18}
          & $\Bigl. d\geq 1$
          & $\Of\pth{n \log^2\! n \log\!\log n}$
          & ?\\
          \hline%
          \hline%
          \multicolumn{4}{|l|}{Reliable spanners \emph{in expectation}}
          \\
          \hline
          \hline
          \multirow{2}{*}{New results}
          & $\Bigl. d=1$
          & $\Of(n)$
          & $\Of\pth{\epsR^{-1} \log \epsR^{-1}}$
          \\
          & $\Bigl. d\geq 2$
          & $\Of(n \log\!\log^2\! n \log\!\log\!\log n )$
          & $\Of\pth{\eps^{-2d} \epsR^{-1} \log^3\! \eps^{-1} \log \epsR^{-1}}$
          \\
          \hline
          \hline
        \end{tabular}%
    \end{center}
    \caption{Comparison of the size of constructions of reliable spanners and reliable spanners in expectation. The reliability parameter is $\epsR>0$, and, for dimensions $d\geq 2$, the graphs are $(1+\eps)$-spanners for $\eps > 0$.}
    \tablab{size-summary}
\end{table}

\paragraph{The problem.} %
As such, the question is whether one can come up with simple and
practical constructions of spanners that have linear or near linear
size, while still possessing some reliability guarantee -- either in
expectation or with good probability.

\paragraph{Some definitions.}
Given a graph $\Graph$, an \emph{attack} $\BSet \subseteq \VX{\Graph}$
is a non-empty set of vertices that are being removed. Informally, a
\emph{damaged set} $\EBSet \supseteq \BSet$, is a set of vertices
which are no longer connected to the rest the graph, or are badly
connected to the rest of the graph -- that is, the graph
$\Graph \setminus \BSet$ is a good spanner for all the points of
$\PS \setminus \EBSet$.  While $\EBSet$ is not unique, we try to pick
a damaged set that is as small as possible.  The \emph{loss} caused by
$\BSet$, is the quantity $\cardin{\EBSet \setminus \BSet}$ (where we
take the minimal damaged set). The \emph{loss rate} of $\BSet$ is
$\lossY{\Graph}{\BSet} = \cardin{\EBSet \setminus \BSet} /
\cardin{\BSet}$. A graph $\Graph$ is \emph{$\epsR$-reliable} if for
any attack $\BSet$, the loss rate $\lossY{\Graph}{\BSet}$ is at most
$\epsR$.  See \defref{formal:r:spanner} for the formal definition.

\paragraph{Randomness and obliviousness.}
As mentioned above, reliable spanners must have size
$\Omega( n \log n)$. A natural way to get a smaller spanner, is to
consider randomized constructions, and require that the reliability
holds in expectation (or with good probability). Randomized
constructions are (usually) still sensitive to adversarial attacks, if
the adversary is allowed to pick the attack set after the construction
is completed (and it is allowed to inspect it). A natural way to deal
with this issue is to restrict the attacks to be \emph{oblivious} --
that is, the attack set is chosen before the graph is constructed (or
without any knowledge of the edges).

In such an oblivious model, the loss rate is a random variable (for a
fixed attack $\BSet$). It is thus natural to construct the graph
$\Graph$ randomly, in such a way that
$\Ex{\lossY{\Graph}{\BSet}} \leq \epsR$, or alternatively, that the
probability $\Prob{\lossY{\Graph}{\BSet} \geq \epsR }$ is small.

\paragraph{$1$-spanner.}
Surprisingly, the one-dimensional problem is the key for building
reliable spanners. Here, the graph $\Graph$ is constructed over the
set of vertices $\IRX{n} = \{1,\ldots, n\}$. An attack is a subset
$\BSet \subseteq \IRX{n}$.  Given an attack $\BSet$, the requirement
is that for all $i,j \in \IRX{n} \setminus \EBSet$, such that $i <j$,
there is a monotonically increasing path\footnote{A path
   $i_1 i_2 \ldots i_k$ is \emph{monotonically increasing} if
   $i_1 < i_2 < \cdots < i_k$.} from $i$ to $j$ in
$\Graph \setminus \BSet$ -- here, the length of the path between $i$
and $j$ is exactly $j-i$. Since there is no distortion in the length
of the path, such graphs are $1$-spanners.

\paragraph{Our results.} %
We give a randomized construction of a $1$-spanner in one dimension,
that is $\epsR$-reliable in expectation, and has size
$\Of\pth{n}$. Formally, the construction has the property that
$\Ex{\lossY{\Graph}{\BSet}} \leq \epsR$.  This construction can also
be modified so that $\lossY{\Graph}{\BSet} \leq \epsR$ holds with some
desired probability. This is the main technical contribution of this
work.

Next, following in the footsteps of the construction of reliable
spanners, we use the one-dimensional construction to get
$(1+\eps)$-spanners that are $\epsR$-reliable either in expectation or
with good probability. The new constructions have size roughly
$\Of\bigl(n \log\!\log^2\!n \bigr)$.

\paragraph{Main idea.}
We borrow the notion of shadow from the work of Buchin \etal~\cite{bho-spda-20}. A point $\pa$
is in the $\alpha$-shadow if there is a neighborhood of $\pa$, such
that an $\alpha$-fraction of it belongs to the attack set.  One can think
about the maximum $\alpha$ such that $\pa$ is in the $\alpha$-shadow
of $\BSet$ as the \emph{depth} of $\pa$ (here, the depth is in the
range $[0,1]$).  A point with depth close to one, are intuitively
surrounded by failed points, and have little hope of remaining well
connected. Fortunately, only a few points have depth truly close to
one.  The flip side is that the attack has little impact on
shallow points (i.e., points with depth close to $0$). Similar to
people, shallow points are surrounded by shallow points. As such, only
a small fraction of the shallow points needs to be strongly connected
to other points in the graph, as paths from (shallow) points around them can
then travel via these hub points.

To this end, similar in spirit to skip lists, we define a random
gradation of the points
$\PS = \PS_0\supseteq \PS_1 \supseteq \ldots \supseteq \PS_{\log n}$,
where $|\PS_i| = n/2^i$ -- this is done via a random tournament
tree. In each level, each point of $\PS_i$ is connected to all its
neighbors within a certain distance (which increases as $i$
increases).  Intuitively, because of the improved connectivity, the
probability that a point is well-connected (after the attack)
increases if they belong to higher level of the gradation.  Thus, the
probability of a shallow point to remain well connected is,
intuitively, good. Specifically, we can quantify the probability of a
vertex to lose its connectivity as a function of its depth. Combining
this with bounds on the number of points of certain depths, results in
bounds on the expected size of the damaged set.

\paragraph{Reliable skip lists.}
Our construction can be interpreted as a reliable construction of skip lists.
Here, an attack removes certain cells in the skip list, which are no longer
available. This can happen, for example, if the skip list is stored
in a distributed fashion in a network, and certain nodes of the network
are down. Our construction implies that one can withstand an attack
with small expected loss. The previous work on skip graphs \cite{as-sg-07},
or \cite{bho-spda-20}, presented constructions of variants of
skip lists with somewhat similar properties, but using $\Of(n \log n)$ pointers.
The current construction requires only $\Of(n)$ pointers.

\paragraph{Comparison to previous work.} %
While we borrow some components of Buchin \etal \cite{bho-spda-20},
the basic scheme in the one-dimensional case, is new, and
significantly different -- the previous construction used expanders in
a hierarchical way.  The new construction requires different analysis
and ideas. The extension to higher dimension is relatively
straightforward and follows the ideas of Buchin \etal
\cite{bho-spda-20}, although some modifications and care are
necessary.

\paragraph{Why the oblivious model?} %
Previous work \cite{bho-spda-20} assumed an adaptive adversary. This
is somewhat unreasonable, as it is unlikely the adversary has full
information about the network. The ``price'' of such a strong
assumption is that the resulting constructions are complicated and
require a large number of edges. The alternative approach, used in
this work, is an \emphw{oblivious adversary} -- which is one of the
three standard adversarial models used in online algorithms
\cite{be-occa-98}.  In our settings, this corresponds to an adversary
that knows the point set, but not the edges used by the spanner. Here,
this obliviousness results in significantly simpler (and with fewer
edges) constructions of reliable spanners\footnote{Reality seems to be
   somewhat murkier -- the adversary has some partial information
   \cite{k-bpgsh-20}.}.

\paragraph{``Breaking'' the lower bound.}

As mentioned above, a reliable Euclidean spanner in the non-oblivious
model requires $\Omega( n \log n)$ edges \cite{bdms-rgs-13}. Thus, our
work here shows that one can break this lower bound by relaxing the
model. Beyond the improvement in the size of the spanner, we consider
this to be the most exciting property of our construction.

\paragraph{Paper organization.}
We review some necessary machinery in \secref{prelims}. The
one-\si{dimens}\-\si{ional} construction is described in
\secref{one_dim}. We describe the extension to higher dimensions in
\secref{higher:dim}.

\section{Preliminaries}
\seclab{prelims}

Let $\Graph=(\PS,\Edges)$ be a $t$-spanner for some $t\geq 1$.  An
\emphi{attack} on $\Graph$ is a set of vertices $\BSet$ that fail, and
no longer can be used. An attack is \emphi{oblivious}, if the set
$\BSet$ is picked without any knowledge of $\Edges$.

\begin{definition}[Reliable spanner]
    \deflab{formal:r:spanner}%
    Let $\Graph=(\PS,\Edges)$ be a $t$-spanner for some $t\geq 1$
    constructed by a (possibly) randomized algorithm.  Given a
    non-empty attack $\BSet$, its \emphi{damaged set} $\EBSet$ is a
    smallest set, such that for any pair of vertices
    $u,v \in \PS \setminus \EBSet$, we have
    \begin{equation*}
        \dGZ{\Graph \setminus \BSet}{u}{v} \leq t \cdot \dY{u}{v} ,
    \end{equation*}
    that is, $t$-paths are preserved for all pairs of points not
    contained in $\EBSet$.  The quantity
    $\cardin{\EBSet \setminus \BSet}$ is the \emphi{loss} of $\Graph$
    under the attack $\BSet$.  The \emphi{loss rate} of $\Graph$ is
    $\lossY{\Graph}{\BSet} = \cardin{\EBSet \setminus \BSet} /
    \cardin{\BSet}$. For $\epsR \in (0,1)$, the graph $\Graph$ is
    \emphi{$\epsR$-reliable} if $\lossY{\Graph}{\BSet} \leq \epsR$
    holds for any attack $\BSet \subseteq \PS$.

    Further, the random graph $\Graph$ is \emphi{$\epsR$-reliable in
       expectation} if $\Ex{\lossY{\Graph}{\BSet}} \leq \epsR$ holds
    for any oblivious attack $\BSet \subseteq \PS$.  For
    $\epsR,\epsP\in(0,1)$, we say that the graph $\Graph$ is
    \emphi{$\epsR$-reliable with probability $1-\epsP$} if
    $\Prob{\lossY{\Graph}{\BSet} \leq \epsR} \geq 1-\epsP$ holds for
    any oblivious attack $\BSet \subseteq \PS$.
\end{definition}

We emphasize that in the latter case the graph is random and the
expectation and the probability are taken with respect to the
distribution of the graphs.  In addition, the set $\EBSet$ is not
unique -- specifically, the attacker chooses the set $\BSet$, but we
have some choice in choosing the damaged set $\EBSet$ (such as,
absurdly, choosing $\EBSet$ to be the set of all the vertices of the
graph). In particular, the analysis we provide shoes that there is
always (in expectation or with good probability) a small damaged set.

\begin{definition}
    Let $\IRX{n}$ denote the \emph{interval} $\brc{ 1,\ldots, n}$.
    Similarly, for $x$ and $y$, let $\IRY{x}{y}$ denote the interval
    $\{x, x+1, \ldots, y\}$.
\end{definition}
We use the shadow notion as it was introduced by Buchin \etal
\cite{bho-spda-20}.

\begin{definition}
    Consider an arbitrary set $\BSet \subseteq \IRX{n}$ and a
    parameter $\alpha \in (0,1)$. A number~$i$ is in the \emphi{left
       $\alpha$-shadow} of $\BSet$, if and only if there exists an
    integer $j \geq i$, such that
    \begin{math}
        \cardin{\IRY{i}{j} \cap \BSet \bigr. }%
        \geq%
        \alpha \cardin{\IRY{i}{j} \bigr.}.
    \end{math}
    Similarly, $i$ is in the \emphi{right $\alpha$-shadow} of $\BSet$,
    if and only if there exists an integer $h$, such that $h \leq i$
    and
    \begin{math}
        \cardin{\IRY{h}{i} \cap \BSet} \geq \alpha
        \cardin{\IRY{h}{i}}.
    \end{math}
    The left and right $\alpha$-shadow of $\BSet$ is denoted by
    $\ShadowLY{\alpha}{\BSet}$ and $\ShadowRY{\alpha}{\BSet}$,
    respectively. The combined shadow is denoted by
    \begin{math}
        \ShadowY{\alpha}{\BSet}%
        =%
        \ShadowLY{\alpha}{\BSet} \cup \ShadowRY{\alpha}{\BSet}.
    \end{math}
\end{definition}

\begin{lemma}[\cite{bho-spda-20}]
    \lemlab{shadow}%
    For any set $\BSet \subseteq \IRX{n}$, and $\alpha \in (0,1)$, we
    have that
    $\cardin{\ShadowY{\alpha}{\BSet}} \leq \bigl(1 +
    2\ceil{1/\alpha}\bigr)\cardin{\BSet}$.
    Further, if $\alpha \in (2/3,1)$, we have that
    $\cardin{\ShadowY{\alpha}{\BSet}} \leq \cardin{\BSet} /
    (2\alpha-1)$.
\end{lemma}

\begin{definition}
    Given a graph $\Graph$ over $\IRX{n}$, a \emphi{monotone path}
    between $i,j \in \IRX{n}$, such that $i < j$, is a sequence of
    vertices $i = i_1 < i_2 < \cdots < i_k = j$, such that
    $i_{\ell-1}i_\ell \in \EdgesX{\Graph}$, for $\ell=2, \ldots, k$.
\end{definition}

A monotone path between $i$ and $j$ has length $|j-i|$.
Throughout the paper we use $\log x$ and $\ln x$ to denote the base $2$ and natural base logarithm
of $x$, respectively.  For any set
$A\subseteq \PS$, let $\compX{A} = \PS \setminus A$ denote the
complement of $A$.
For two integers $x,y > 0$, let~$\UpY{x}{y} = \ceil{ {x}/{y}}y $.

\section{Reliable spanners in one dimension}
\seclab{one_dim}%

We show how to build a random graph on $\IRX{n}$ that still has monotone
paths almost for all vertices that are not directly attacked. First,
in \secref{1-d-analysis}, we show that our construction is
$\epsR$-reliable in expectation. Then, in \secref{1-d-analysis-prob},
we show how to modify the construction to obtain a $1$-spanner that is
$\epsR$-reliable with probability $1-\epsP$.

\subsection{Construction}
\seclab{one_dim_constr}%

The input consists of a parameter $\epsR >0$ and the point set
$\PS=\IRX{n} = \brc{1,\ldots, n}$. The backbone of the construction is
a random elimination tournament, see \figref{tournament-tree} as an example.
We assume that $n$ is a power of $2$ as otherwise one can
construct the graph for the next power of two, and then throw away the
unneeded vertices.

The tournament is a full binary tree, with the leafs storing the
values from $1$ to $n$, say from left to right. The value of a node is
computed randomly and recursively. For a node, once the values of the
nodes were computed for both children, it randomly copies the value of
one of its children, with equal probability to choose either
child. Let $\PS_i$ be the values stored in the $i$\th bottom level of
the tree. As such, $\PS_0 =\PS$, and $\PS_{\log n}$ is a
singleton. Each set $\PS_i$ can be interpreted as an ordered set (from
left to right, or equivalently, by value).

Let
\begin{equation}
    \sp=1-\frac{\epsR}{8}
    \qquad\text{and}\qquad
    \eps=
    \frac{8(1-\sp)}{\constA \ln \epsR^{-1}}
    =%
    \frac{\epsR}{\constA \ln \epsR^{-1}},
    \eqlab{sp:value}%
\end{equation}
where $\constA>1$ is a sufficiently large constant.  Let $M$ be the
smallest integer for which $\cardin{\PS_M} \leq {2^{M/2}}/{\eps}$
holds (i.e., $M = \ceil{ (2/3)\log (\eps n)}$). For $i=0,1,\dots,M$,
and for all $\pa \in \PS_i$ connect $\pa$ with the first
\begin{equation}
    \connX{i} = \ceil{\frac{2^{i/2}}{\eps}}
    \eqlab{conn}%
\end{equation}
successors (and hence predecessors) of $\pa$ in $\PS_i$. Let $\Edges_i$ be
the set of all edges in level $i$. The graph $\Graph$ on $\PS$ is
defined as the union of all edges over all levels -- that is,
$\Edges(\Graph)=\cup_{i=0}^M \Edges_i$.  Note, that top level of the
graph $\Graph$ is a clique.

\begin{figure}
    \centering \includegraphics{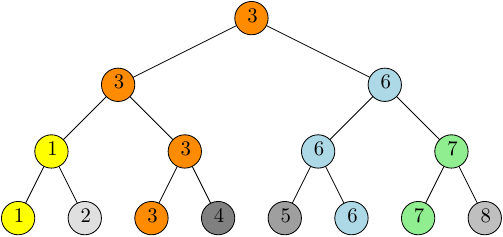}
    \caption{An example of a tournament tree with $n=8$.}
    \figlab{tournament-tree}
\end{figure}

\begin{remark}
    Before dwelling on the correctness of the construction, note that
    the obliviousness of the attack is critical. Indeed, it is quite
    easy to design an attack if the structure of $\Graph$ is known.
    To this end, let $\BSet_i$ be the set of
    $\connX{M} = O(n^{1/3} /\eps)$ values of $\PS_i$ closest to $n/2$
    -- namely, we are taking out the middle-part of the graph, that
    belongs to the $i$\th level.  Consider the attack
    $\BSet = \cup \BSet_i$. It is easy to verify that this attack
    breaks $\Graph$ into at least two disconnected graphs, each of
    size at least $n/2 - O(n^{1/3}\eps^{-1} \log n)$.
\end{remark}

\subsection{Analysis}
\seclab{1-d-analysis}
\begin{lemma}
    \lemlab{size-1d} The graph $\Graph$ has
    $\Of\left( n \epsR^{-1} \log \epsR^{-1} \right)$ edges.
\end{lemma}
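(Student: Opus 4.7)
The proof is a direct level-by-level edge count followed by a geometric sum. The key first observation is that since $n$ is assumed to be a power of two, the full binary tournament tree yields $\cardin{\PS_i} = n/2^i$ at every level $i$, regardless of the random choices, since each leaf's value propagates up through a single root-leaf path and different subtrees at level $i$ hold disjoint sets of values.

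For each $\pa \in \PS_i$, the construction adds at most $\connX{i}$ forward edges and $\connX{i}$ backward edges within $\PS_i$, so the sum of degrees at level $i$ is at most $2 \cardin{\PS_i} \connX{i}$, giving
\[
\cardin{\Edges_i} \;\leq\; \cardin{\PS_i} \cdot \connX{i} \;\leq\; \frac{n}{2^i}\pth{\frac{2^{i/2}}{\eps}+1} \;=\; \frac{n}{\eps\, 2^{i/2}} + \frac{n}{2^i}.
\]
I would then sum over $i=0,1,\ldots,M$. The first term is a geometric series with ratio $1/\sqrt{2}$, summing to $(n/\eps) \cdot \pth{1-1/\sqrt{2}}^{-1} = \Of(n/\eps)$, and the second term is bounded by $2n$. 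Hence $\cardin{\EdgesX{\Graph}} = \Of(n/\eps)$. Finally, substituting $\eps = \epsR/\pth{\constA \ln \epsR^{-1}}$ from \Eqref{sp:value} gives $\cardin{\EdgesX{\Graph}} = \Of\pth{n\, \epsR^{-1} \log \epsR^{-1}}$, as claimed.

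The only subtlety to check is the top level, where $\PS_M$ is a clique. By the defining property of $M$ one has $\cardin{\PS_M} \leq 2^{M/2}/\eps \leq \connX{M}$, so the clique automatically satisfies $\connX{M} \geq \cardin{\PS_M}-1$; its edge count is therefore already captured by the same per-level bound $\cardin{\PS_M}\cdot \connX{M}$ used above, and no separate accounting is needed. There is no deep obstacle here: the statement is essentially an exercise in summing a geometric series, and the only place one must be careful is to verify that the choice of $M$ does not force us to add a nontrivial correction for the top level.
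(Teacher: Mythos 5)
Your proof is correct and matches the paper's argument: both are a level-by-level edge count $\cardin{\Edges_i}\le \cardin{\PS_i}\connX{i}$, a geometric sum over $i$, and the substitution of $\eps=\epsR/(\constA\ln\epsR^{-1})$. The only cosmetic differences are how you handle the ceiling ($\ceil{x}\le x+1$ versus the paper's $\ceil{x}\le 2x$) and the explicit remark about the top-level clique, which is a nice sanity check but not strictly needed since that level is counted by the same per-point bound.
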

\begin{proof}
    The number of edges contributed by a point in $\PS_i$ is at most
    $\connX{i}$ at level $i$, and $\cardin{\PS_i}= {n}/{2^i}$. Thus,
    we have
    \begin{align*}
        \cardin{\Edges(\Graph)}
        &\leq
        \sum_{i=0}^M |\PS_i| \cdot \connX{i}
        \leq
        \sum_{i=0}^M \frac{n}{2^i} \cdot \ceil{\frac{2^{i/2}}{\eps}}
        \leq
        \sum_{i=0}^M \frac{n}{2^i} \cdot \frac{2\cdot 2^{i/2}}{\eps}
        \leq
        \frac{n}{\eps} \cdot \sum_{i=0}^\infty \frac{2}{2^{i/2}}
        =
        \Of\left(\frac{n}{\eps}\right).
    \end{align*}%
\end{proof}

Fix an attack $\BSet \subseteq \PS$. The high-level idea is to show
that if a point $\pa \in \PS \setminus \BSet$ is far enough from the
faulty set, then, with high probability, there exist monotone paths
reaching far from $\pa$ in both directions. For two points
$\pa < \pb$, we show that if both $\pa$ and $\pb$ have far reaching
monotone paths, then the path going to the right from $\pa$, and the
path going to the left from $\pb$ must cross each other, which in turn
implies, that there is a monotone path between $\pa$ and
$\pb$. Therefore, it is enough to bound the number of points that does
not have far reaching monotone paths.

\begin{definition}[Stairway]
    Let $\pa\in \PS$ be an arbitrary point. The path
    $\pa=\pa_0,\pa_1,\dots,\pa_j$ is a right (resp., left)
    \emphi{stairway} of $\pa$ to level $j$, if
    \begin{enumerate}[label=(\roman*)]
        \item $\pa = \pa_0 \leq \pa_1 \leq \dots \leq \pa_j$ (resp.,
        $\pa \geq \pa_1 \geq \dots \geq \pa_j$),

        \item if $\pa_i \neq \pa_{i+1}$, then
        $ \pa_i \pa_{i+1} \in \Edges_i$, for $i=0,1,\dots,j-1$,

        \item $\pa_i \in \PS_i$, for $i=1,\dots,j$.
    \end{enumerate}
    Furthermore, a stairway is \emphi{safe} if none of its points are
    in the attack set $\BSet$. A right (resp., left) stairway is
    \emphi{usable}, if $\IRY{\pa_j}{n} \cap \PS_j$ (resp.,
    $\IRY{1}{\pa_j} \cap \PS_j$) forms a clique in $\Graph$.  Let
    $\StwSet \subseteq \PS$ denote the set of points that have a safe
    and usable stairway to both directions.
\end{definition}

Let $\sp_k = {\sp}/{2^k}$, for $k=0,1,\dots, \log n$. Let
$\SH_k = \ShadowY{\sp_k}{\BSet}$ be the $\sp_k$-shadow of $\BSet$, for
$k=0,1,\dots, \log n$. Observe that
$\SH_0 \subseteq \SH_1 \subseteq \cdots \subseteq \SH_{\log n}$, and
there is an index $j$ such that $\SH_j = \PS$, if
$\BSet \neq \emptyset$.  A point is classified according to when it
gets ``buried'' in the shadow. A point $\pa$, for $k\geq 1$, is a
\emphi{$k$\th round} point, if $\pa \in \SH_k \setminus
\SH_{k-1}$. Intuitively, a $k$\th round point is more likely to have a
safe stairway the larger the value of $k$ is.

\begin{definition}
    A point is \emphi{bad} if it belongs to $\BSet$, or it does not
    have a right or left stairway that is safe and usable. Formally, a
    point $\pa \in \PS$ is bad, if and only if
    $\pa \in \PS \setminus \StwSet$.
\end{definition}

\begin{lemma}
    \lemlab{mon-path} For any two points $\pa,\pb \in \StwSet$ that
    are not bad, there is a monotone path connecting $\pa$ and $\pb$
    in the residual graph $\Graph\setminus\BSet$.
\end{lemma}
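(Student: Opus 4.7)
The plan is to build an explicit safe monotone-increasing path from $\pa$ to $\pb$ by splicing a safe usable right stairway of $\pa$ with the reverse of a safe usable left stairway of $\pb$. Let these stairways be $\pa=\pa_0,\pa_1,\dots,\pa_{j_a}$ and $\pb=\pb_0,\pb_1,\dots,\pb_{j_b}$, guaranteed by $\pa,\pb \in \StwSet$. By symmetry I may assume $j_a \leq j_b$, so that $\pa_\ell,\pb_\ell \in \PS_\ell$ are both defined for $\ell \in \{0,1,\dots,j_a\}$. Since $\pa_0 = \pa < \pb = \pb_0$, I can let $\ell$ be the largest index in $\{0,\dots,j_a\}$ with $\pa_\ell \leq \pb_\ell$ and case-split on whether $\ell = j_a$ (the ``non-crossing'' case) or $\ell < j_a$ (a ``crossing'' at $\ell^*=\ell+1$).

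If $\ell = j_a$, then $\pb_{j_a} \in \PS_{j_a}$ lies to the right of $\pa_{j_a}$, so usability of $\pa$'s stairway places $\pb_{j_a}$ inside the clique $\IRY{\pa_{j_a}}{n} \cap \PS_{j_a}$, producing the edge $\pa_{j_a}\pb_{j_a}$. Concatenating $\pa_0,\dots,\pa_{j_a}$ with the reversed stairway $\pb_{j_a},\dots,\pb_0$ via this edge already yields the desired monotone-increasing safe path. For the crossing case, set $\ell^* = \ell+1$, so $\pa_\ell \leq \pb_\ell$ but $\pa_{\ell^*} > \pb_{\ell^*}$, and note that all four points $\pa_\ell,\pa_{\ell^*},\pb_\ell,\pb_{\ell^*}$ lie in $\PS_\ell$ since $\PS_{\ell^*} \subseteq \PS_\ell$. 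I would identify a level $\ell'$ at which the stairway edge $\pa_\ell\pa_{\ell^*}$ (or $\pb_\ell\pb_{\ell^*}$) is realized, meaning both endpoints are within $\connX{\ell'}$ positions of each other in $\PS_{\ell'}$. A subcase split on whether $\pa_\ell \leq \pb_{\ell^*}$ now produces a direct connecting edge: in the first subcase, $\pb_{\ell^*}$ is sandwiched between $\pa_\ell$ and $\pa_{\ell^*}$ in the sorted order of $\PS_{\ell'}$, and is therefore within $\connX{\ell'}$ positions of $\pa_\ell$, so the edge $\pa_\ell\pb_{\ell^*}$ exists; in the second subcase, $\pa_\ell$ is symmetrically sandwiched between $\pb_{\ell^*}$ and $\pb_\ell$, giving the edge $\pa_\ell\pb_\ell$. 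Splicing either edge into the prefix of $\pa$'s stairway and the corresponding suffix of $\pb$'s reversed stairway yields a monotone increasing path, and safety is automatic since every vertex on the constructed path lies on one of the two safe stairways.

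\textbf{Main obstacle.} The crossing case is the crux: one must select the correct level $\ell'$ at which the stairway edge is realized, verify that the sandwich configuration genuinely produces an edge of $\Graph$ at that level, and handle the degenerate steps where consecutive stairway entries coincide (so that the stairway definition supplies no edge to leverage). Each such degeneracy either directly falls into the opposite subcase or contradicts the minimality of $\ell^*$, so a careful but routine case check closes the argument.
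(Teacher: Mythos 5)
Your proof is correct and follows essentially the same strategy as the paper: locate a level where the two stairways cross (the paper takes the first index $i$ with $\pa_i \geq \pb_i$, you take the last index $\ell$ with $\pa_\ell \leq \pb_\ell$), then use the fact that the bridging vertex is sandwiched between the two endpoints of an existing stairway edge, so it inherits an edge from the same level clique; the paper splits on $\pa_{\ell^*}$ versus $\pb_\ell$ while you split on the dual comparison $\pa_\ell$ versus $\pb_{\ell^*}$, and both choices force the needed stairway step to be non-degenerate and yield a monotone spliced path. The non-crossing case via usability and the safety argument are identical to the paper's.
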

\begin{proof}
    Suppose we have $\pa < \pb$. Let $(\pa,\pa_1,\dots, \pa_{j(\pa)})$
    be a safe usable right stairway starting from $\pa$ and
    $(\pb,\pb_1,\dots,\pb_{j(\pb)})$ be a safe usable left stairway
    from $\pb$. These stairways exist, since $\pa, \pb \in
    \StwSet$. Let $j=\min\pth{j(\pa),j(\pb)}$ and consider the
    stairways $(\pa,\pa_1,\dots,\pa_j)$ and
    $(\pb,\pb_1,\dots,\pb_j)$. Notice that both are safe and at least
    one of them is usable.

    Let $i$ be the first index such that $\pa_i \geq \pb_i$, if there
    is any. We distinguish two cases based on whether
    $\pa_i < \pb_{i-1}$ holds or not. In case $\pa_i < \pb_{i-1}$, the path
    $$(\pa,\pa_1,\dots,\pa_{i-1},\pa_i,\pb_{i-1},\dots,\pb_1,\pb)$$ is a
    monotone path from $\pa$ to $\pb$, since
    $\pb_i\pb_{i-1} \in \Edges_{i-1}$ implies
    $\pa_i\pb_{i-1} \in \Edges_{i-1}$. On the other hand, if we have
    $\pa_i \geq \pb_{i-1}$, the path
    $(\pa,\pa_1,\ldots,\pa_{i-1}, \pb_{i-1},\dots,\pb_1,\pb)$ is a
    monotone path between $\pa$ and $\pb$, since
    $\pa_{i-1} \pa_i \in \Edges_{i-1}$ implies
    $\pa_{i-1} \pb_{i-1} \in \Edges_{i-1}$.

    Finally, if $\pa_i < \pb_i$ holds for all
    $i=1,\dots,j$, then the path
    $(\pa,\pa_1,\dots,\pa_j,\pb_j,\dots,\pb_1,\pb)$ is a monotone path
    between $\pa$ and $\pb$. We have $\pa_j\pb_j \in \Edges_j$, since
    at least one of the stairways is usable. This concludes the proof
    that there is a monotone path from $\pa$ to $\pb$.
\end{proof}

\begin{lemma}
    \lemlab{hit:P:i}%
    For a fixed set $\QS \subseteq \IRX{n}$, we have that
    $\Prob{ \QS \cap \PS_i = \emptyset} \leq \exp( - \cardin{\QS}
    /2^i)$.
\end{lemma}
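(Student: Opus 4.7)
The plan is to exploit the recursive product structure of the tournament tree. At level $i$, the tree partitions the leaves $\IRX{n}$ into $m := n/2^i$ consecutive intervals $I_1,\dots,I_m$ of length $2^i$ each, one per subtree whose root sits at level $i$. Let $X_j \in I_j$ denote the value stored at the root of the $j$\th such subtree, so that $\PS_i = \brc{X_1,\dots,X_m}$.

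The first step is to establish that $X_j$ is distributed uniformly over $I_j$. This is a simple induction on $i$: at the base $i=0$ each leaf holds its own value, and at level $i$ the root copies the value of one of its two level-$(i-1)$ children, chosen uniformly; if by induction each child's value is uniform over $2^{i-1}$ leaves in its own half of $I_j$, then combining the $1/2$ choice of side with the conditional $1/2^{i-1}$ yields the uniform distribution on $I_j$. The second, equally simple, observation is that $X_1,\dots,X_m$ are mutually independent, since the random coin flips determining $X_j$ live entirely within the $j$\th subtree, and distinct level-$i$ subtrees are disjoint.

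Given these two properties, the event $\brc{\QS \cap \PS_i = \emptyset}$ is the intersection of the independent events $\brc{X_j \notin \QS \cap I_j}$, each of which has probability $1 - \cardin{\QS \cap I_j}/2^i$ by uniformity. Writing $q_j = \cardin{\QS \cap I_j}$ and using $1-x \leq e^{-x}$, we then obtain
\begin{equation*}
    \Prob{\QS \cap \PS_i = \emptyset}
    = \prod_{j=1}^{m} \pth{1 - \frac{q_j}{2^i}}
    \leq \prod_{j=1}^{m} \exp\pth{-q_j/2^i}
    = \exp\pth[\bigg]{-\frac{1}{2^i}\sum_{j=1}^{m} q_j}
    = \exp\pth{-\cardin{\QS}/2^i},
\end{equation*}
where the last equality uses that the $I_j$ partition $\IRX{n}$, so $\sum_j q_j = \cardin{\QS}$. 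There is no real obstacle here; the only point deserving mild care is the uniformity claim, which the inductive argument above handles cleanly.
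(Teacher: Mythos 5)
Your proof is correct, and it takes a genuinely different route from the paper. The paper decomposes the event as a telescoping product of conditional probabilities $\prod_j \Pr\bigl[\pb_j \notin \PS_i \mid \pb_1,\ldots,\pb_{j-1}\notin\PS_i\bigr]$ over the elements of $\QS$, and bounds each factor by $1-1/2^i$ by invoking (but not explicitly proving) a negative-association observation: conditioning on some points of $\QS$ being absent from $\PS_i$ can only increase the chance that another point is present. Your argument instead exploits the exact joint structure of $\PS_i$: the tournament tree partitions the leaves into $n/2^i$ blocks of size $2^i$, each block contributes exactly one uniformly random element to $\PS_i$, and the contributions from distinct blocks are independent because they depend on disjoint coin flips. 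This converts the bound into a clean product over blocks with $\sum_j q_j = \cardin{\QS}$ by the partition property. Your version is arguably more transparent and self-contained, since the correlation inequality the paper relies on is precisely what your independence-across-blocks observation makes rigorous; the paper's phrasing is shorter but implicitly appeals to the same underlying structure. Both approaches give the identical bound $\exp(-\cardin{\QS}/2^i)$.
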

\begin{proof}
    Let $\QS = \{ \pb_1, \ldots, \pb_r\}$, and observe that knowing
    that certain points of $\QS$ are not in $\PS_i$, increases the
    probability of another point to be in $\PS_i$. That is,
    $\ProbCond{ \pb_j \in \PS_i}{\pb_1,\ldots, \pb_{j-1} \notin \PS_i}
    \geq \Prob{\pb_j \in \PS_i} = 1/2^i$.  As such, we have
    \begin{align*}
        \Prob{ \QS \cap \PS_i = \emptyset \bigr.}%
        &=%
        \Prob{\Bigl. \smash{ \bigcap_j (\pb_j \notin \PS_i)}} =%
        \prod_{j=1}^r \ProbCond{\pb_j \notin \PS_i }{\pb_1, \ldots,
           \pb_{j-1} \notin \PS_j} \\
        &\leq%
        \pth{1-1/2^i}^r%
        \leq %
        \exp( -r/2^i).~
    \end{align*}%
\end{proof}

\begin{figure}
    \centerline{\includegraphics{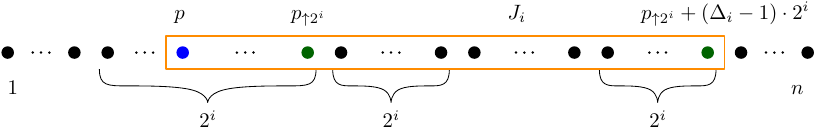}}%
    \caption{The interval
       $\IV_i=\IRY{\pa}{ \smash{\UpY{\pa}{2^i} + (\Delta_i -1) \cdot
             2^i} \bigr.}$.}
    \figlab{intervals}
\end{figure}

\begin{lemma}
    \lemlab{stairway}%
    Assume that $\epsR \in (0,1/2)$ and let
    $\pa \in \SH_k \setminus \SH_{k-1}$ be a $k$\th round point for
    some $k\geq 1$.  The probability that $\pa$ is bad is at most
    $(\epsR/2)^{k}/32$.
\end{lemma}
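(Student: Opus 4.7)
The plan is to bound the probability that $\pa$ fails to have a safe usable right stairway; by the left-right symmetry of the construction, twice this bound suffices. First observe that $\pa \notin \BSet$ (since $\BSet \subseteq \SH_0 \subseteq \SH_{k-1}$), so $\pa_0 := \pa$ is automatically safe, and only levels $i \geq 1$ can ``fail.''

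I would define \emph{target intervals} $\IV_i := \IRY{\pa}{\UpY{\pa}{2^i} + (\Delta_i-1)\cdot 2^i}$ for $i=1,\dots,M$, with $\Delta_i = \Theta(2^{i/2}/\eps)$ and a small enough absolute constant (cf.\ \figref{intervals}). The right-endpoint alignment ensures that $\IV_i$ contains at least $\Delta_i - 1$ complete level-$i$ groups of the tournament tree, each of which contributes exactly one $\PS_i$-point drawn uniformly from its $2^i$ leaves, with distinct groups giving independent draws. Define $\pa_i$ greedily as the smallest safe $\PS_i$-point in $\IV_i$, and say level $i$ \emph{fails} if no such point exists. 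Two deterministic observations then guarantee that non-failure of all levels yields a valid safe usable right stairway: (a) monotonicity $\pa_{i-1} \leq \pa_i$ holds because either $\pa_i \in \IV_{i-1}$ (and then $\pa_i$ is a safe $\PS_{i-1}$-candidate, so the minimality of $\pa_{i-1}$ forces $\pa_{i-1} \leq \pa_i$) or else $\pa_i$ lies past the right endpoint of $\IV_{i-1}$; and the edge $\pa_{i-1}\pa_i \in \Edges_{i-1}$ follows by bounding $|\PS_{i-1} \cap (\pa_{i-1}, \pa_i]| \leq |\PS_{i-1} \cap \IV_i| \leq |\IV_i|/2^{i-1}+1 \leq \connX{i-1}$, which is where the choice of the constant in $\Delta_i$ enters; (b) usability at level $M$ is automatic since $|\PS_M| \leq 2^{M/2}/\eps \leq \connX{M}$ forces $\PS_M$ to be a clique in $\Edges_M$.

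For the probabilistic step, each of the $\Delta_i - 1$ independent uniform group-winners lands in $\BSet$ with probability $b_g/2^i$, where $b_g = |\BSet \cap g|$. AM--GM together with the shadow bound $|\BSet \cap \IV_i|/|\IV_i| < \sp_{k-1}$ (available because $\pa \notin \SH_{k-1}$ and $|\IV_i| \leq \Delta_i \cdot 2^i$) gives
\[
    \Prob{\text{level } i \text{ fails}}
    \ \leq\ \prod_g \frac{b_g}{2^i}
    \ \leq\ \left(\frac{|\BSet \cap \IV_i|}{(\Delta_i - 1)\,2^i}\right)^{\Delta_i - 1}
    \ \leq\ e \cdot \sp_{k-1}^{\Delta_i - 1}.
\]
Since $\sp_{k-1} < 1$ and $\Delta_i$ grows geometrically in $i$, the union bound over $i=1,\dots,M$ yields a total right-stairway failure probability of order $\sp_{k-1}^{\Delta_1 - 1}$. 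Substituting $\sp_{k-1} = (1 - \epsR/8)/2^{k-1}$ and $\Delta_1 = \Theta(1/\eps) = \Theta(\epsR^{-1}\log \epsR^{-1})$, and taking $\constA$ in \Eqref{sp:value} sufficiently large, this is at most $(\epsR/2)^k / 64$; doubling for the left stairway yields the claimed $(\epsR/2)^k/32$.

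The main obstacle is the joint calibration of the $\Delta_i$. They must be small enough (with an explicit absolute constant) to enforce \emph{deterministic} connectivity $\pa_{i-1}\pa_i \in \Edges_{i-1}$ through the crude level-$(i-1)$ group count in $\IV_i$, yet simultaneously satisfy $\Delta_1 = \Omega(\epsR^{-1}\log\epsR^{-1})$ in order that $e \cdot \sp^{\Delta_1}$ be smaller than the target failure probability at $k=1$. This two-sided pressure is precisely what dictates the definition of $\eps$ in \Eqref{sp:value}, with its $\log\epsR^{-1}$ factor.
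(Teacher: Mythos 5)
Your proposal is correct and follows the same structural decomposition as the paper's proof: the same intervals $\IV_i = \IRY{\pa}{\UpY{\pa}{2^i}+(\Delta_i-1)2^i}$, the same greedy choice of $\pa_i$ as the leftmost safe point of $\PS_i$ in $\IV_i$, the same deterministic clique argument for monotonicity and for $\pa_{i-1}\pa_i \in \Edges_{i-1}$, and a union bound over levels. The one genuine difference is in the per-level failure estimate. The paper applies \lemref{hit:P:i} -- a positive-correlation bound giving $\Prob{\QS\cap\PS_i=\emptyset}\leq\exp(-\cardin{\QS}/2^i)$ -- to $\QS=\IV_i\setminus\BSet$, and then splits the union bound at $i=k-1$ after noting that $\Prob{\Event_i}=0$ deterministically for $i\leq k-2$. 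You instead exploit the exact independence of the complete level-$i$ group-winners inside $\IV_i$ and apply AM--GM to get $\Prob{\text{level }i\text{ fails}}\leq\prod_g(b_g/2^i)\leq e\cdot\sp_{k-1}^{\Delta_i-1}$. This is, if anything, a sharper bound (pointwise $x\leq e^{x-1}$), and the factor $2^{-(k-1)}$ baked into $\sp_{k-1}$ supplies the decay in $k$ without the paper's case split -- a nice simplification. Two small items to tighten: (a) the count $\cardin{\PS_{i-1}\cap\IV_i}\leq\cardin{\IV_i}/2^{i-1}+1$ carries an off-by-one that the paper avoids precisely because the right endpoint of $\IV_i$ is a multiple of $2^{i}$, hence of $2^{i-1}$ -- that alignment is the reason for the $\UpY{\pa}{2^i}$ term and yields the clean bound $\leq 2\Delta_i\leq\connX{i-1}$ with no slack; (b) for $\pa$ near $n$ the stairway must stop at $\itop$ (the largest $i$ with $\IV_i\subseteq\PS$) rather than at $M$, with usability supplied by $\IV_{\itop+1}\cap\PS_\itop$ being a clique rather than by $\PS_M$ being a clique. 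Both are cosmetic.
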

\begin{proof}
    For any integer $i \geq 1$, let
    $\Delta_i=\floor{{2^{\pth{i-1}/2}} / (2\eps) }$ and let
    \begin{equation*}
        \IV_i%
        =%
        \IRY{\pa}{ \smash{\UpY{\pa}{2^i} + (\Delta_i -1) \cdot 2^i}
           \bigr.},
    \end{equation*}
    see \figref{intervals}. Recall that $p \in \IRX{n}$, so
    $\UpY{\pa}{2^i} = \ceil{p/2^i}2^i$ is the next multiple of
    $2^i$. Let $\itop$ be the largest integer such that
    $\IV_\itop \subseteq \PS$.  For $i=0,\ldots, \itop$, the points of
    $\IV_{i+1} \cap \PS_{i}$ form a clique in $\Graph$, since
    \begin{equation*}
        \cardin{\IV_{i+1} \cap \PS_{i}}%
        \leq%
        \ceil{\cardin{\IV_{i+1}}/2^{i}}%
        \leq%
        \ceil{ 2^{i+1} \Delta_{i+1} /2^{i}}
        = %
        2 \Delta_{i+1}
        \leq%
        \ceil{\bigl.\smash{2^{i/2}}/{\eps}}
        =%
        \connX{i}.
    \end{equation*}
    Indeed, any two vertices of $\PS_i$ with distance at most
    $\connX{i}$ are connected by an edge of $\Edges_i$.  As such, it is
    enough to prove that there is a right safe stairway from $\pa$,
    that climbs on the levels to level $\itop$. Since
    $\IV_{\itop+1} \cap \PS_{\itop}$ forms a clique, it follows that
    such a stairway would be usable.

    Let $\Event_i$ be the event that
    $\pth{\IV_i \setminus \BSet} \cap \PS_i$ is empty, for
    $i=1,\dots,\itop$.  Since $\pa \notin \SH_{k-1}$, we have that
    \begin{math}
        \cardin{\IV_i \cap \BSet}%
        <%
        \sp_{k-1} \cardin{\IV_i}%
        \leq%
        2^{i} \sp_{k-1} \Delta_i.
    \end{math}
    On the other hand, we have
    \begin{math}
        \cardin{ \IV_i \cap \PS_i }%
        \geq%
        2^i (\Delta_i -1) /2^i%
        =%
        \Delta_i -1.
    \end{math}
    As such, if
    $\cardin{\IV_i \cap \BSet} < \cardin{ \IV_i \cap \PS_i }$ then
    $\Prob{\Event_i} = 0$. This happens if
    \begin{math}
        2^{i} \sp_{k-1} \Delta_i \leq \Delta_i-1%
        \iff%
        2^{i-k+1} \sp \leq (\Delta_i - 1)/ \Delta_i,
    \end{math}
    which happens if $i \leq k-2$, given that $\Delta_i \geq
    2$. Notice that $\Delta_i \geq 2$ holds for all $i\geq 1$, if
    $\eps \leq \frac{1}{4}$.

    So assume that $i \geq k-1$.  Let $\pb_1,\dots,\pb_r$ be all
    points of $\IV_i \setminus \BSet$, which are the possible
    candidates to be contained in
    $\pth{\IV_i \setminus \BSet} \cap \PS_i$.  By \Eqref{sp:value},
    there are at least
    \begin{align*}
      r
      = \cardin{\IV_i} - \cardin{\IV_i \cap \BSet} %
        &\geq%
        (1- \sp_{k-1}) \cardin{\IV_i} %
        \geq%
        (1- \sp_{k-1}) 2^{i} (\Delta_i -1)
        \geq%
        (1- \sp_{k-1}) 2^{i}
        \Bigl({\frac{2^{\pth{i-1}/2}}{2\eps}-2} \Bigr) \\
      &=%
        \frac{\constA(1- \sp_{k-1}) \ln \epsR^{-1}}{16(1-\sp)}
        2^{3i/2 -1/2} - (1- \sp_{k-1}) 2^{i+1} \\
        &\geq %
        \constA
        2^{3i/2 -9/2} \ln \epsR^{-1} - 2^{i+1}
    \end{align*}
    such points. Observe, that by the structure of the construction, a
    point is more likely to be contained in $\PS_i$ conditioned on the
    event there are some other points which are not contained in
    $\PS_i$. Therefore, by \lemref{hit:P:i}, we have
    \begin{math}
        \Prob{\bigl.\Event_i} \leq \exp \bigl( - {r}/{2^i} \bigr) \leq
        \tau_i,
    \end{math}
    for
    \begin{math}
        \tau_i = \exp\pth{2 - \constA 2^{i/2 -9/2} \ln \epsR^{-1}}.
    \end{math}
    The sequence $\tau_i$ has a fast decay in $i$, since
    \begin{align*}
      \frac{\tau_{i+1}}{\tau_i}
      &=
        \exp \pth{-(\sqrt{2}-1) \constA 2^{i/2 - 9/2} \ln \epsR^{-1}}
        \leq
        \exp \pth{-\constA 2^{-6} \ln 2}
        =
        2^{-\constA 2^{-6}}
        \leq
        \frac{1}{2},
    \end{align*}
    if $\constA\geq 2^6$ holds.  Thus, we have
    \begin{align*}
      \Prob{\cup_{i=1}^\itop \Event_i\bigr.}%
      &\leq%
        \sum_{i=1}^\itop \Prob{ \Event_i\bigr.}%
        \leq%
        \sum_{i=k-1}^\itop \tau_i%
        \leq%
        2 \tau_{k-1}
        =%
        2\exp\pth{ 2 - \constA 2^{(k-1)/2 -9/2} \ln \epsR^{-1}} \\
      &\leq%
        16\exp\pth{  - \frac{\constA}{32} 2^{k/2} \ln \epsR^{-1}}
        =
        16 \cdot \epsR^{({\constA}/{32}) 2^{k/2}}
        \leq
        2^4 \cdot \epsR^{({\constA}/{2^6}) k} \\
      &\leq
        2^4 \cdot \pth{\frac{1}{2}}^{({\constA}/{2^7}) k} \cdot
        \pth{\epsR^{({\constA}/{2^7})}}^k
        \leq%
        \frac{(\epsR/2)^{k}}{64}
    \end{align*}
    for $\constA \geq 2^{11}$, using the conditions
    $0<\epsR\leq \frac{1}{2}$, $k\geq 1$ and the fact that
    $x\leq 2^x$.

    Let $\pa_i$ be the leftmost point in
    $(\IV_i \setminus \BSet) \cap \PS_i$, for $i\geq 0$. Since
    $\PS_i \subseteq \PS_{i-1}$, for all $i$, it follows that
    $\pa = \pa_0 \leq \pa_1 \leq \cdots \leq \pa_\itop$. Furthermore,
    since $\IV_{i+1} \cap \PS_{i}$ is a clique in level $i$ of $\Graph$, and
    $\pa_i, \pa_{i+1} \in \IV_{i+1} \cap \PS_{i}$, it follows that
    $\pa_i \pa_{i+1} \in \Edges_i$, if $\pa_i \neq \pa_{i+1}$,
    for all $i$. We conclude that $\pa, \pa_1, \ldots, \pa_\itop$ is a
    safe and usable right stairway in $\Graph$.

    The bound now follows by applying the same argument symmetrically
    for the left stairway. Indeed, using the union bound, we obtain
    $\Prob{ \pa \text{ is bad}} \leq 2 {(\epsR/2)^{k}}/{64} =
    {(\epsR/2)^{k}}/{32}$.
\end{proof}

\begin{lemma}
    \lemlab{bad-points}%
    Let $\epsR\in (0,1/2)$ and $\BSet\subseteq \PS$ be an oblivious
    attack. Recall, that $\compX{\StwSet}$ is the set of bad
    points. Then, we have
    $\Ex{\cardin{\compX{\StwSet}}} \leq (1+\epsR) \cardin{\BSet}$.
\end{lemma}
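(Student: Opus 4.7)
The plan is to partition the bad set $\compX{\StwSet}$ by shadow level and bound each piece using the tools already established. First I would observe that $\BSet \subseteq \SH_0$: any attacked point $\pa$ satisfies the shadow condition trivially on the singleton interval $\IRY{\pa}{\pa}$ since $\sp < 1$. Thus all attacked points, together with all round-$0$ bad points, are absorbed by $\SH_0$; every remaining non-attacked bad point belongs to some round $R_k := \SH_k \setminus \SH_{k-1}$ for $k \geq 1$. This gives
\[
\cardin{\compX{\StwSet}} \leq \cardin{\SH_0} + \sum_{k \geq 1} \cardin{R_k \cap \compX{\StwSet}},
\]
and I would take expectations termwise.

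For the round-$0$ term, since $\epsR < 1/2$ we have $\sp = 1 - \epsR/8 > 2/3$, so \lemref{shadow_eps} applies and yields $\cardin{\SH_0} \leq \cardin{\BSet}/(2\sp - 1) = \cardin{\BSet}/(1 - \epsR/4) \leq (1 + \epsR/2)\cardin{\BSet}$, using $1/(1-x) \leq 1 + 2x$ for $x \leq 1/2$. For each round $k \geq 1$, I would use \lemref{stairway} to bound the conditional probability that a round-$k$ point is bad by $(\epsR/2)^k/32$, and \lemref{shadow} with $\alpha = \sp_k = \sp/2^k$ to bound $\cardin{\SH_k} \leq \pth{1 + 2\ceil{2^k/\sp}}\cardin{\BSet} = \Of\pth{2^k}\cardin{\BSet}$. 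By linearity of expectation the contribution of round $k$ is at most $\Of\pth{2^k}\cardin{\BSet} \cdot (\epsR/2)^k/32 = \Of\pth{\epsR^k}\cardin{\BSet}$, and summing the geometric series over $k \geq 1$ yields $\Of\pth{\epsR}\cardin{\BSet}$. A short calculation shows the hidden constants are small enough that the total is bounded by $(1+\epsR)\cardin{\BSet}$.

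\textbf{Main obstacle.} The delicate point is keeping the constants tight. Each shadow $\SH_k$ is a factor $\Theta(2^k)$ larger than $\BSet$, and the entire argument only closes because \lemref{stairway} supplies a probability bound that decays faster than this shadow growth by exactly the factor $\epsR^k$. The choices $\sp = 1 - \epsR/8$ and the denominator $32$ in \lemref{stairway} are calibrated precisely so that the round-$0$ contribution adds only about $\epsR/2$ to $\cardin{\BSet}$, while the rounds $k \geq 1$ add at most another $\epsR/2$; any looseness in either estimate would break the clean bound $(1+\epsR)\cardin{\BSet}$.
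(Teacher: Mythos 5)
Your proposal follows essentially the same route as the paper's proof: decompose the bad set by shadow round, bound $\cardin{\SH_0}$ via \lemref{shadow_eps}, bound $\cardin{\SH_k}$ via \lemref{shadow}, apply \lemref{stairway} for the per-point probability, and sum the geometric series. The only difference is that you leave the final constant-checking as "a short calculation," whereas the paper carries it out explicitly ($\cardin{\SH_k}\leq 2^{k+3}\cardin{\BSet}$, so the round-$k$ contribution is at most $(\epsR^k/4)\cardin{\BSet}$, and $\sum_{k\geq 1}\epsR^k/4\leq\epsR/2$ for $\epsR<1/2$); that bookkeeping is needed to actually land on the factor $1+\epsR$, but your outline is correct and the approach matches the paper's.
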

\begin{proof}
    We may assume that all the points of $\SH_0$ are bad. Fortunately,
    by \lemref{shadow}, we have
    $\cardin{\SH_0} \leq \cardin{\BSet} / (2\sp -1) = \cardin{\BSet} /
    (1- \epsR/4) \leq (1+\epsR/2) \cardin{\BSet}$, since
    $\sp=1-{\epsR}/{8}$ and $1/(1-x/4) \leq 1 + x/2$ for $0 \leq x \leq 2$.
    Again, using \lemref{shadow}, we have
    \begin{equation*}
        \cardin{\SH_k \setminus \SH_{k-1}}%
        \leq%
        \cardin{\SH_k} \leq \pth{1 + 2\ceil{2^k/\sp}}\cardin{\BSet} \leq
        \pth{3 + \frac{2^{k+1}}{\sp}}\cardin{\BSet} %
        \leq%
        2^{k+3} \cardin{\BSet}.
    \end{equation*}
    For $k\geq 1$, we have, by \lemref{stairway}, that
    \begin{align*}
      b_k
      &=
        \Ex{\bigl.\cardin{ \pth{\SH_k \setminus \SH_{k-1}} \cap \compX{\StwSet}}}
        \leq %
        \sum_{\pa \in \SH_k \setminus \SH_{k-1}} \Prob{ \pa \text{ is bad}} %
        \leq%
        2^{k+3} \cardin{\BSet} \cdot
        \frac{(\epsR/2)^{k}}{32}
        \leq
        \frac{\epsR^k}{4} \cardin{\BSet}.
    \end{align*}
    Since,
    $\compX{\StwSet} = \pth{\SH_0 \cap \compX{\StwSet}} \cup
    \bigcup_{k\geq 1} \bigl[ \pth{\SH_k \setminus \SH_{k-1}} \cap
    \compX{\StwSet} \bigr] $, we have, by linearity of expectation,
    that
    \begin{align*}
      \frac{\Ex{\cardin{\compX{\StwSet}}\bigr.}}{\cardin{\BSet}}%
      &\leq%
        \frac{1}{\cardin{\BSet}} \pth{\cardin{\SH_0} + \sum_{k=1}^\infty b_k}
        \leq%
        1+\frac{\epsR}{2} +
        \sum_{k=1}^\infty \frac{\epsR^k}{4}
        \leq%
        1+\frac{\epsR}{2} + \frac{\epsR}{4(1-\epsR)}
        \leq%
        \pth{1+\epsR},
    \end{align*}
    since $\epsR < 1/2$.
\end{proof}

\begin{theorem}
    \thmlab{1-d}%
    Let $\epsR \in \pth{0,1/2}$ and $\PS=\IRX{n}$ be fixed.  The graph
    $\Graph$, constructed in \secref{one_dim_constr}, has
    $ \Of\pth{n \epsR^{-1} \log\epsR^{-1}}$ edges, and it is a
    $\epsR$-reliable $1$-spanner of $\PS$ in expectation. Formally,
    for any oblivious attack $\BSet$, we have
    $\Ex{\lossY{\Graph}{\BSet}} \leq \epsR$.
\end{theorem}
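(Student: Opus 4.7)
The plan is to assemble the theorem directly from the three lemmas already in place: \lemref{size-1d} for the edge bound, \lemref{mon-path} for spanner paths among non-bad points, and \lemref{bad-points} for the expected number of bad vertices. Essentially nothing new needs to be proved, just glued together.

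First, I would dispatch the edge count. By \lemref{size-1d}, the number of edges is $\Of(n/\eps)$. Substituting $\eps = \epsR/(\constA \ln \epsR^{-1})$ from \Eqref{sp:value} immediately gives $\Of(n\epsR^{-1}\log\epsR^{-1})$ edges, which is the claimed bound.

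Next, I would handle the reliability claim. Fix an oblivious attack $\BSet$. Take the damaged set to be $\EBSet = \compX{\StwSet}$, the set of bad points. Note that by definition every point of $\BSet$ is bad, so $\BSet \subseteq \EBSet$, and hence $|\EBSet \setminus \BSet| = |\compX{\StwSet}| - |\BSet|$. I have to verify that this choice of $\EBSet$ is a valid damaged set, i.e., for every pair $\pa,\pb \in \PS \setminus \EBSet = \StwSet$ the residual graph $\Graph \setminus \BSet$ contains a $1$-path between $\pa$ and $\pb$. But this is exactly \lemref{mon-path}: any two points in $\StwSet$ are connected by a monotone path in $\Graph \setminus \BSet$, and a monotone path between $\pa$ and $\pb$ in $\IRX{n}$ has length exactly $|\pb-\pa|$, which is the Euclidean distance. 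So $\EBSet$ is a valid (indeed, the minimal up to the defining freedom) damaged set.

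Finally, applying \lemref{bad-points}, I would conclude
\begin{equation*}
   \Ex{\lossY{\Graph}{\BSet}}
   = \frac{\Ex{|\EBSet \setminus \BSet|}}{|\BSet|}
   = \frac{\Ex{|\compX{\StwSet}|} - |\BSet|}{|\BSet|}
   \leq \frac{(1+\epsR)|\BSet| - |\BSet|}{|\BSet|}
   = \epsR,
\end{equation*}
which is exactly the required bound. The assumption $\epsR \in (0,1/2)$ needed by \lemref{bad-points} is inherited from the theorem hypothesis. There is no real obstacle here; the only small thing to double-check is that $\BSet \subseteq \compX{\StwSet}$ so that $|\compX{\StwSet} \setminus \BSet| = |\compX{\StwSet}| - |\BSet|$, which is immediate from the definition of bad point.
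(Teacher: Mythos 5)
Your proposal is correct and follows the paper's proof essentially verbatim: invoke \lemref{size-1d} for the edge count, use $\compX{\StwSet}$ as the damaged set, justify its validity via \lemref{mon-path}, and bound its expected size via \lemref{bad-points}. The only addition is your explicit check that $\BSet \subseteq \compX{\StwSet}$, which the paper leaves implicit but which indeed holds since a stairway starting at a point of $\BSet$ cannot be safe.
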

\begin{proof}
    By \lemref{size-1d} the size of the construction is
    \begin{math}
        \cardin{\Edges(\Graph)} %
        =%
        \Of\pth{n \epsR^{-1} \log\epsR^{-1}}.
    \end{math}
    Let $\BSet\subseteq \PS$ be an oblivious attack and consider the
    bad set $\PS\setminus \StwSet$.  By \lemref{mon-path}, for any two
    points outside the bad set, there is a monotone path connecting
    them. Further, by \lemref{bad-points}, we have
    $\Ex{\cardin{\PS\setminus\StwSet}} \leq (1+\epsR) \cardin{\BSet}$
    for any oblivious attack. Therefore, we obtain
    $\Ex{\lossY{\Graph}{\BSet}} \leq \Ex{\cardin{\compX{\StwSet}
          \setminus \BSet} / \cardin{\BSet}} \leq \epsR$.
\end{proof}

\subsection{Probabilistic bound}
\seclab{1-d-analysis-prob}

One can replace the guarantee, in \thmref{1-d}, on the bound of the
loss rate (which holds in expectation), by an upper bound that holds
with probability at least $1-\epsP$, for some prespecified
$\epsP > 0$. A straightforward application of Markov's inequality
implies that taking the union of $\log \epsP^{-1}$
independent copies ($\Graph'$) of the construction of \thmref{1-d} with parameter $\epsR/2$, results in a
graph with the desired property. Indeed, we have
\begin{align*}
    \Prob{\lossY{\Graph}{\BSet} > \epsR}
    &\leq
    \Prob{\lossY{\Graph'}{\BSet} > \epsR}^{\log\epsP^{-1}}
    \leq
    \pth{\frac{\Ex{\lossY{\Graph'}{\BSet}}}{\epsR}}^{\log\epsP^{-1}}
    \leq
    \pth{\frac{1}{2}}^{\log\epsP^{-1}}
    =
    \epsP.
\end{align*}
Here we show how one can do better to avoid the multiplicative factor $\log\epsP^{-1}$.

\paragraph{Construction.} %
The input consists of two parameters $\epsR,\epsP > 0$ and the set
$\PS = \IRX{n}$. Let $\Graph$ be the graph constructed in
\secref{one_dim_constr} with parameters
\begin{equation*}
    \sp=1-\frac{\epsR}{8}
    \qquad\text{and}\qquad
    \eps=
    \frac{8(1-\sp)}{\constA (\ln \epsR^{-1} + \ln \epsP^{-1})}
    =%
    \frac{\epsR}{\constA (\ln \epsR^{-1} + \ln \epsP^{-1})},
\end{equation*}
where $\constA>1$ is a sufficiently large constant. First, we need a
variant of \lemref{stairway} to bound the probability of a $k$\th
round point being bad, using the new value of $\eps$.
\begin{lemma}
    \lemlab{stairway_2}%
    Assume that $\epsR \in (0,1/2)$, $\epsP \in (0,1)$ and let
    $\pa \in \SH_k \setminus \SH_{k-1}$ be a $k$\th round point for
    some $k\geq 1$.  The probability that $\pa$ is bad is at most
    ${\epsR \cdot \epsP}/{2^{3k+4}}$.
\end{lemma}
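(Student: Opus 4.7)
The proof follows the template of \lemref{stairway} almost verbatim, with the new value of $\eps$ substituted throughout. The plan is to reuse the construction of the intervals $\IV_i=\IRY{\pa}{\UpY{\pa}{2^i}+(\Delta_i-1)\cdot 2^i}$, the events $\Event_i = \{(\IV_i \setminus \BSet) \cap \PS_i = \emptyset\}$, the proof that $\IV_{i+1} \cap \PS_i$ is a clique in $\Graph$, and the observation that $\Prob{\Event_i}=0$ for $i \le k-2$ -- none of these depend on how $\eps$ relates to $\epsR$ and $\epsP$ (only on the structural inequalities $\Delta_i \geq 2$ and $\connX{i} \geq 2 \Delta_{i+1}$, which hold as before provided $\constA$ is large enough to force $\eps \le 1/4$).

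The only computation that genuinely changes is the lower bound on the number $r$ of candidate points in $\IV_i \setminus \BSet$ for $i \ge k-1$. Plugging the new $\eps = \epsR/\bigl(\constA(\ln\epsR^{-1}+\ln\epsP^{-1})\bigr)$ into exactly the same chain of inequalities as in \lemref{stairway} yields
\begin{equation*}
    r \;\ge\; \constA\, 2^{3i/2-9/2}\bigl(\ln\epsR^{-1}+\ln\epsP^{-1}\bigr) - 2^{i+1},
\end{equation*}
so by \lemref{hit:P:i},
\begin{equation*}
    \Prob{\Event_i} \;\le\; \tau_i \;=\; \exp\!\Bigl(2 - \constA\, 2^{i/2-9/2}\bigl(\ln\epsR^{-1}+\ln\epsP^{-1}\bigr)\Bigr).
\end{equation*}
The same geometric-ratio estimate gives $\tau_{i+1}/\tau_i \le 1/2$ when $\constA \ge 2^6$, so summing and applying the union bound over $i=k-1,\dots,\itop$ yields
\begin{equation*}
    \Prob{\cup_i \Event_i} \;\le\; 2\tau_{k-1} \;=\; 2e^2 \cdot (\epsR \epsP)^{\constA\, 2^{k/2-5}}.
\end{equation*}

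The one step where the arithmetic genuinely differs from \lemref{stairway} is the final simplification: I need to show $2e^2 (\epsR\epsP)^{\constA\, 2^{k/2-5}} \le \epsR\epsP / 2^{3k+5}$, so that the symmetric argument for the left stairway and one more union bound produce the claimed bound $\epsR\epsP / 2^{3k+4}$. Factoring out one copy of $\epsR\epsP$ reduces this to $(\epsR\epsP)^{\constA\, 2^{k/2-5}-1} \le 1/(2e^2 \cdot 2^{3k+5})$. Taking $\log_2$ and using $\log_2((\epsR\epsP)^{-1}) \ge \log_2(\epsR^{-1}) \ge 1$ (from $\epsR \le 1/2$), the requirement reduces to the purely numerical
\begin{equation*}
    \constA\, 2^{k/2-5} - 1 \;\ge\; 3k + 5 + \log_2(2e^2),
\end{equation*}
which is satisfied for all $k \ge 1$ as soon as $\constA$ is a sufficiently large absolute constant (e.g.\ $\constA \ge 2^{11}$ already works, the worst case being $k=1$). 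I anticipate this final numerical verification to be the main -- and essentially only -- obstacle, and it is of the same flavour as the analogous estimate in \lemref{stairway}, so no new ideas are needed.
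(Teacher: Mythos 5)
Your proposal is correct and follows essentially the same route as the paper: substitute the new $\eps$ into the argument of \lemref{stairway}, obtain $\tau_i=\exp\bigl(2-\constA 2^{i/2-9/2}(\ln\epsR^{-1}+\ln\epsP^{-1})\bigr)$, and reduce the final bound to a purely numerical inequality in $\constA$ and $k$. The paper's bookkeeping differs only cosmetically (it uses $2^{k/2-5}\ge k/2^6$ and then pulls out one factor each of $\epsR$ and $\epsP$ from the resulting linear-in-$k$ exponent, landing on $\constA\ge 2^{10}$ rather than your $2^{11}$); the structure and content are identical.
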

\begin{proof}
    The proof is the same as the proof of \lemref{stairway}. The only
    difference is due to the new value of $\eps$, which results in
    $$\tau_i = \exp\pth{2 - \constA 2^{i/2 -9/2} (\ln \epsR^{-1} + \ln
       \epsP^{-1})},$$ using the same notation. Therefore, we have
    \begin{align*}
      \Prob{p \in \SH_k \setminus \SH_{k-1} \text{ is bad}}
      &\leq
        4 \tau_{k-1}
        =
        4\exp\pth{2 - \constA 2^{k/2 -5} (\ln \epsR^{-1} + \ln \epsP^{-1})} \\
      &\leq
        2^5\exp\pth{ - \frac{\constA}{2^6} k (\ln \epsR^{-1} + \ln \epsP^{-1})}
        =
        2^5 \cdot \epsR^{\frac{\constA}{2^6}k} \cdot  \epsP^{\frac{\constA}{2^6}k} \\
      &\leq
        2^5 \cdot \pth{\frac{1}{2}}^{({\constA}/{2^6})k-1} \cdot
        \epsR \cdot \epsP
        =
        2^{-\frac{\constA}{2^6}k +6} \cdot \epsR \cdot \epsP
        \leq
        2^{-3k -4} \cdot \epsR \cdot \epsP,
    \end{align*}
    for $\constA\geq 2^{10}$. See \lemref{stairway} for a complete
    proof.
\end{proof}

\begin{lemma}
    \lemlab{bad-points-prob}%
    Let $\epsR\in (0,1/2)$, $\epsP\in (0,1)$ be fixed and
    $\BSet\subseteq \PS$ be an oblivious attack.  Then, with
    probability $\geq 1-\epsP$, the number of bad points is at most
    $(1+\epsR)\cardin{\BSet}$. That is, we have
    \begin{math}
        \Prob{\cardin{\compX{\StwSet}} \leq (1+\epsR) \cardin{\BSet}}
        \geq 1-\epsP.
    \end{math}
\end{lemma}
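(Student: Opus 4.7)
The plan is to follow the structure of the expectation-based argument in \lemref{bad-points}, but replace the final step with Markov's inequality applied to the total (rather than term-by-term), exploiting the much sharper per-point bound from \lemref{stairway_2}. As before, partition $\compX{\StwSet}$ according to the ``round'' at which a bad point is absorbed into the shadow, writing
\begin{equation*}
    \compX{\StwSet}
    \;\subseteq\;
    \SH_0 \;\cup\; \bigcup_{k\geq 1} \bigl[(\SH_k\setminus \SH_{k-1}) \cap \compX{\StwSet}\bigr].
\end{equation*}

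The first step is deterministic: by \lemref{shadow_eps} and $\sp=1-\epsR/8$,
\begin{math}
    \cardin{\SH_0}\leq \cardin{\BSet}/(2\sp-1)\leq (1+\epsR/2)\cardin{\BSet},
\end{math}
so it suffices to show that, with probability at least $1-\epsP$, the number of bad points outside $\SH_0$ is at most $(\epsR/2)\cardin{\BSet}$.

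The second step is to control $X_k:=\cardin{(\SH_k\setminus\SH_{k-1})\cap\compX{\StwSet}}$ in expectation. Combining \lemref{shadow} (giving $\cardin{\SH_k}\leq 2^{k+3}\cardin{\BSet}$) with \lemref{stairway_2} (giving a per-point bad-probability of at most $\epsR\epsP/2^{3k+4}$) and linearity of expectation,
\begin{equation*}
    \Ex{X_k}\;\leq\;2^{k+3}\cardin{\BSet}\cdot \frac{\epsR\epsP}{2^{3k+4}}
    \;=\;\frac{\epsR\epsP\cardin{\BSet}}{2^{2k+1}}.
\end{equation*}
Summing the geometric series,
\begin{equation*}
    \Ex{\sum_{k\geq 1} X_k}
    \;\leq\;
    \epsR\epsP\cardin{\BSet}\sum_{k=1}^{\infty}\frac{1}{2^{2k+1}}
    \;=\;\frac{\epsR\epsP\cardin{\BSet}}{6}.
\end{equation*}

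The third step is a single application of Markov's inequality to the aggregate variable $Y:=\sum_{k\geq 1}X_k$:
\begin{equation*}
    \Prob{\,Y \;>\; \tfrac{\epsR}{2}\cardin{\BSet}\,}
    \;\leq\;\frac{\Ex{Y}}{(\epsR/2)\cardin{\BSet}}
    \;\leq\;\frac{\epsP}{3}
    \;\leq\;\epsP.
\end{equation*}
On the complementary event, $\cardin{\compX{\StwSet}}\leq \cardin{\SH_0}+Y\leq (1+\epsR/2)\cardin{\BSet}+(\epsR/2)\cardin{\BSet}=(1+\epsR)\cardin{\BSet}$, which is exactly the claimed bound.

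The only delicate point is verifying that the $\epsP$-factor in \lemref{stairway_2} is strong enough to let us take the union over all rounds and still finish with a \emph{single} Markov step (rather than a union bound of Markov steps per round, which would lose a factor of $\log\epsP^{-1}$). The geometric decay $\Ex{X_k}=O(\epsR\epsP\cardin{\BSet}/4^k)$ makes the total expectation already an $\epsP$-fraction of the target threshold, so this works without any further tricks. No new construction or new concentration argument is needed beyond what \lemref{stairway_2} already provides.
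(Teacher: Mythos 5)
Your proof is correct, and it takes a mildly different route from the paper's. The paper defines per-round events $\Event_k$ that $X_k = \cardin{(\SH_k\setminus\SH_{k-1})\cap\compX{\StwSet}}$ exceeds the budget $\frac{\epsR}{2^{k+1}}\cardin{\BSet}$, applies Markov's inequality to each $X_k$ separately to get $\Prob{\Event_k}\leq \epsP/2^k$, and then takes a union bound over $k$ to obtain total failure probability $\sum_{k\geq 1}\epsP/2^k\leq\epsP$. You instead apply Markov's inequality once, to the aggregate $Y=\sum_{k\geq 1}X_k$, against the single threshold $\frac{\epsR}{2}\cardin{\BSet}$. Both arguments are elementary and rest on the same two ingredients (the shadow-size bound $\cardin{\SH_k}\leq 2^{k+3}\cardin{\BSet}$ and the per-point bound from \lemref{stairway_2}); yours is slightly cleaner and incidentally yields the marginally better failure probability $\epsP/3$. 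One small correction to your closing commentary: the per-round union-bound strategy does \emph{not} cost a factor of $\log\epsP^{-1}$. The paper chooses the per-round thresholds $\frac{\epsR}{2^{k+1}}\cardin{\BSet}$ to decay geometrically precisely so that the union bound $\sum_k \epsP/2^k$ still sums to $\epsP$ with no loss. The $\log\epsP^{-1}$ overhead you have in mind is the one incurred by the naive alternative of taking $\log\epsP^{-1}$ independent copies of the expectation-reliable construction, which is discussed (and deliberately avoided) at the start of \secref{1-d-analysis-prob}.
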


\begin{proof}
    The idea is to give bounds on the number of bad
    $k$\th round
    points for all $k\geq 1$.  Let $\Event_k$ be the event that
    $\cardin{ \pth{\SH_k \setminus \SH_{k-1}} \cap \compX{\StwSet}} >
    \frac{\epsR}{2^{k+1}} \cardin{\BSet}$ happens, for $k\geq
    1$. Recall, by the choice of $\sp$, we have
    $\cardin{\SH_0 \cap \compX{\StwSet}} \leq \cardin{\SH_0} \leq
    \pth{1+\frac{\epsR}{2}}\cardin{\BSet}$. Notice, that at least one
    of the events $\Event_k$ must happen, for $k\geq 1$, in order to
    have $\cardin{\compX{\StwSet}} > (1+\epsR)\cardin{\BSet}$, since
    \begin{align*}
        \cardin{\compX{\StwSet}}
        &=
        \cardin{\SH_0 \cap \compX{\StwSet}} + \sum_{k=1}^\infty \cardin{ \pth{\SH_k \setminus \SH_{k-1}} \cap \compX{\StwSet}}
        \leq
        \pth{1+\frac{\epsR}{2}}\cardin{\BSet} + \sum_{k=1}^\infty \frac{\epsR}{2^{k+1}} \cardin{\BSet}
        =
        (1+\epsR)\cardin{\BSet}.
    \end{align*}
    Using Markov's inequality and \lemref{stairway_2} we get
    \begin{align*}
      \Prob{\Event_k}
      \leq
        \frac{\Ex{\cardin{ \pth{\SH_k \setminus \SH_{k-1}} \cap \compX{\StwSet}}}}{\frac{\epsR}{2^{k+1}} \cardin{\BSet}}
        &\leq
        \frac{\cardin{\SH_k}\cdot \Prob{p \in \SH_k \setminus \SH_{k-1} \text{ is bad}}}{\frac{\epsR}{2^{k+1}} \cardin{\BSet}}
        \leq
        \frac{2^{k+3} \cardin{\BSet}\cdot \frac{\epsR \cdot \epsP}{2^{3k+4}}}{\frac{\epsR}{2^{k+1}} \cardin{\BSet}}
        =
        \frac{\epsP}{2^k}.
    \end{align*}
    Therefore, we obtain
    \begin{align*}
      \Prob{\cardin{\compX{\StwSet}} > (1+\epsR)\cardin{\BSet} \bigr.}
      &\leq
        \Prob{\cup_{k\geq 1} \Event_k \bigr.}
        \leq
        \sum_{k=1}^\infty \Prob{\Event_k \bigr.}
        \leq
        \sum_{k=1}^\infty \frac{\epsP}{2^k}
        \leq
        \epsP,
    \end{align*}
    which is equivalent to
    $\Prob{\cardin{\compX{\StwSet}} \leq (1+\epsR) \cardin{\BSet}}
    \geq 1-\epsP$.
\end{proof}

\begin{theorem}
    \thmlab{1-d-prob}%
    Let $\epsR \in \pth{0,1/2}$, $\epsP \in \pth{0,1}$ and
    $\PS=\IRX{n}$ be fixed.  The graph $\Graph$, constructed above, is
    a $\epsR$-reliable $1$-spanner of $\PS$, with probability at least
    $1 -\epsP$. Formally, we have
    $\Prob{\lossY{\Graph}{\BSet} \leq \epsR} \geq 1-\epsP$ for any
    oblivious attack $\BSet$. Furthermore, the graph $\Graph$ has
    $ \Of\pth{n \epsR^{-1} (\log\epsR^{-1} + \log\epsP^{-1})}$ edges.
\end{theorem}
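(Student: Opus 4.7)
The plan is to assemble \thmref{1-d-prob} from the machinery already set up in the probabilistic section, exactly mirroring the structure of the proof of \thmref{1-d} but replacing expectation bounds with high-probability bounds.

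First I would handle the size bound. The construction is identical to that of \secref{one_dim_constr}, only with a different choice of $\eps$. The proof of \lemref{size-1d} is oblivious to the specific value of $\eps$ and yields $\cardin{\EdgesX{\Graph}} = \Of(n/\eps)$. Substituting
\begin{equation*}
    \eps = \frac{\epsR}{\constA(\ln \epsR^{-1} + \ln \epsP^{-1})}
\end{equation*}
immediately gives $\cardin{\EdgesX{\Graph}} = \Of\pth{n\epsR^{-1}(\log\epsR^{-1} + \log\epsP^{-1})}$, as required.

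Next I would address the reliability guarantee. Fix an oblivious attack $\BSet \subseteq \PS$ and consider the set $\StwSet$ of points having both safe and usable stairways. By \lemref{mon-path}, any two points in $\StwSet$ are connected by a monotone path in the residual graph $\Graph \setminus \BSet$; since a monotone path between $i<j$ has length exactly $j-i$, such points enjoy the $1$-spanner property after the attack. Consequently the damaged set $\EBSet$ may be taken to be contained in $\compX{\StwSet}$. The crucial step is then to invoke \lemref{bad-points-prob} (which relies on the refined tail estimate from \lemref{stairway_2} made possible by the new $\eps$), giving
\begin{equation*}
    \Prob{\cardin{\compX{\StwSet}} \leq (1+\epsR)\cardin{\BSet}} \geq 1 - \epsP.
\end{equation*}

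Finally I would convert this into the desired loss-rate bound. On the event $\cardin{\compX{\StwSet}} \leq (1+\epsR)\cardin{\BSet}$, we have
\begin{equation*}
    \lossY{\Graph}{\BSet}
    \leq \frac{\cardin{\compX{\StwSet} \setminus \BSet}}{\cardin{\BSet}}
    \leq \frac{(1+\epsR)\cardin{\BSet} - \cardin{\BSet}}{\cardin{\BSet}}
    = \epsR,
\end{equation*}
where I use that $\BSet \subseteq \compX{\StwSet}$ by definition of a bad point. Therefore $\Prob{\lossY{\Graph}{\BSet} \leq \epsR} \geq 1-\epsP$ for any oblivious attack, completing the argument.

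The main obstacle was really already dispatched by the proofs of \lemref{stairway_2} and \lemref{bad-points-prob}: the delicate part is the inflation of $\eps$ by a $\log \epsP^{-1}$ factor, which sharpens the per-point failure probability from $(\epsR/2)^k/32$ to $\epsR\epsP/2^{3k+4}$, leaving just enough slack in a Markov/union-bound argument over rounds $k\geq 1$ to concentrate the total bad count with probability $1-\epsP$. Given those lemmas, the theorem itself is simply a matter of chaining size $\to$ \lemref{mon-path} $\to$ \lemref{bad-points-prob} $\to$ loss rate.
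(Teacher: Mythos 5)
Your proof is correct and takes essentially the same route as the paper: size from \lemref{size-1d}, connectivity outside $\compX{\StwSet}$ from \lemref{mon-path}, and the probability bound from \lemref{bad-points-prob}, then converting the bad-set bound to a loss-rate bound using $\BSet\subseteq\compX{\StwSet}$. The paper's proof is identical, merely more terse in the final conversion step.
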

\begin{proof}
    The bound on the size follows directly from \lemref{size-1d}.
    Let $\BSet\subseteq \PS$ be an oblivious attack and consider the
    bad set $\PS\setminus \StwSet$.  By \lemref{mon-path}, for any two
    points outside the bad set, there is a monotone path connecting
    them. Further, by \lemref{bad-points-prob}, we have
    $\Prob{\lossY{\Graph}{\BSet} \leq \epsR} \geq \Prob{\cardin{\compX{\StwSet}} \leq (1+\epsR) \cardin{\BSet}}
    \geq 1-\epsP$ for any oblivious attack.
\end{proof}

\section{Reliable spanners in higher dimensions}
\seclab{higher:dim}

Now we turn to the higher-dimensional setting, and show that one can
construct spanners with near linear size that are reliable in
expectation or with some fixed probability (which can be provided as
part of the input).  We use the same technique as Buchin \etal
\cite{bho-spda-20}, that is, we use our one-dimensional construction
as a black box in combination with a result of
Chan~\etal~\cite{chj-lsota-20}. Let the dimension $d>1$ be fixed. In
the following we assume $\PS \subset [0,1)^d$, which can be achieved
by an appropriate scaling and translation of the $d$-dimensional
Euclidean space $\Re^d$. For an ordering $\order$ of $[0,1)^d$, and
two points $\pp,\pq \in [0,1)^d$, such that $\pp \prec \pq$, let
$(\pp,\pq)_{\order} = \Set{\pz \in [0,1)^d}{ \pp \prec \pz \prec \pq}$
be the set of points between $\pp$ and $\pq$ in the order $\order$.

\begin{theorem}[\cite{chj-lsota-20}]
    \thmlab{lso}%
    For $\epsB \in (0,1)$, there is a set $\ordAll(\epsB)$ of
    $M(\epsB) = \Of( \epsB^{-d} \log \epsB^{-1} )$ orderings of
    $[0,1)^d$, such that for any two (distinct) points
    $\pp, \pq \in [0,1)^d$, with $\ell = \dY{\pp}{\pq}$, there is an
    ordering $\order \in \ordAll$, and a point $\pz \in [0,1)^d$, such
    that %
    \begin{enumerate}[label=(\roman*)]
            \item $\pp \prec_\order \pq$,
            \item
            $(\pp,\pz)_\order \subseteq \ballY{\pp}{\bigl.  \epsB
               \ell}$,
            \item
            $(\pz,\pq)_\order \subseteq \ballY{\pq}{\bigl.  \epsB
               \ell}$, and%
            \item $\pz \in \ballY{\pp}{\bigl.  \epsB \ell}$ or
            $\pz \in \ballY{\pq}{\bigl.  \epsB \ell}$.
    \end{enumerate}
    \noindent%
    Furthermore, given such an ordering $\order$, and two points
    $\pp,\pq$, one can compute their ordering, according to $\order$,
    using $\Of(d\log \epsB^{-1})$ arithmetic and bitwise-logical
    operations.
\end{theorem}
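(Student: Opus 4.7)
\medskip
\noindent
\textbf{Proof proposal.} The plan is to build $\ordAll(\epsB)$ from a small family of shifted hierarchical grids (quadtrees) over $[0,1)^d$, together with a depth-first traversal inside each quadtree that uses a fixed child order. First, I would take the lattice of translations $\Set{(j_1/k,\dots,j_d/k)}{0 \le j_i < k}$ where $k = \Theta(\epsB^{-1})$, yielding $k^d = \Of(\epsB^{-d})$ shifted hierarchical grids. A padding argument shows that for any pair $\pp,\pq \in [0,1)^d$ at Euclidean distance $\ell$, and for the level whose cells have side length $\Theta(\ell/\epsB)$, at least one of these shifts places both points inside a single cell $C$ while separating them into distinct child cells of $C$. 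Letting the level range over the $\Of(\log\epsB^{-1})$ relevant scales in $(0,\sqrt{d}\,]$ accounts for the extra logarithmic factor in $M(\epsB) = \Of(\epsB^{-d}\log\epsB^{-1})$.

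For each shifted quadtree I would fix a canonical ordering of the $2^d$ children of every internal node, and define the corresponding ordering $\order \in \ordAll(\epsB)$ as the DFS preorder induced by this child order. Then $\pp \prec_\order \pq$ is decided by which child cell of the separating cell $C$ contains $\pp$, and every point strictly between $\pp$ and $\pq$ in $\order$ lies inside $C$, in fact inside the child subtree of $C$ containing $\pp$ or the one containing $\pq$. Taking $\pz$ to be the transition between these two phases of the DFS --- the last element of $\pp$'s child subtree in DFS order, equivalently the first of $\pq$'s --- immediately yields conditions (i)--(iv). Properties (ii) and (iii) hold because each child cell of $C$ has diameter at most $\epsB\ell$ once the hidden constants are set so that the side of $C$ is $\Of(\epsB\ell/\sqrt{d})$, and property (iv) holds because $\pz$ itself belongs to one of the two child subtrees and hence lies in $\ballY{\pp}{\epsB\ell}$ or in $\ballY{\pq}{\epsB\ell}$.

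The hard part will be executing the padding step deterministically with only $\Of(\epsB^{-d})$ shifts. With a uniformly random shift, the probability that one of $\pp,\pq$ lands within distance $\epsB\ell$ of any coordinate hyperplane of the relevant level is $\Of(d\epsB)$, so with constant probability the pair is padded inside a single cell; a pigeonhole over the $k^d$ lattice shifts then extracts a single shift that works for each pair at the given scale. The concluding algorithmic claim then amounts to computing the deepest common ancestor of two shifted leaves in a given shifted quadtree: one examines the top $\Of(\log\epsB^{-1})$ bits of the shifted coordinates to locate the differing level, then performs a constant-size lookup dictated by the fixed child order to compare the two children, giving an $\Of(d\log\epsB^{-1})$-operation comparison procedure.
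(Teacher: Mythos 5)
The theorem you are proving is cited from Chan, Har-Peled, and Jones \cite{chj-lsota-18}; the paper imports it as a black box and gives no proof, so your proposal is really a reconstruction of an external result rather than an alternative to anything in this paper.

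Your reconstruction has a genuine gap. The problem is the phase where you upgrade ``$\pp$ and $\pq$ lie in distinct children of a common cell $C$'' to properties (ii)--(iv). You state first that $C$ should have side $\Theta(\ell/\epsB)$ (so that it can contain both points, which are at distance $\ell$), and later that ``the side of $C$ is $\Of(\epsB\ell/\sqrt{d})$'' so that its children have diameter $\le \epsB\ell$. These two requirements are incompatible: a cell of side $O(\epsB\ell/\sqrt d)$ has diameter $<\ell$ and cannot contain both $\pp$ and $\pq$. If instead $C$ has side $\Theta(\ell)$ or larger, then its children have diameter $\Theta(\ell)\gg\epsB\ell$, and the DFS preorder visits, between $\pp$'s child and $\pq$'s child, all the intermediate children in the fixed child order. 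Those intermediate subtrees can be at distance $\Theta(\ell)$ from \emph{both} endpoints, violating (ii) and (iii). Crucially, shifting the grid cannot repair this: a translation changes which cell contains which point, but it does not change the fixed child order of the DFS, so it cannot make $\pp$'s and $\pq$'s children adjacent in the traversal. You have attributed the $\epsB^{-d}$ factor to the shifts, but a constant number (Chan's $d+1$) of shifts already suffices for padding; the $\epsB^{-d}$ factor in $M(\epsB)$ actually pays for a \emph{family of different child/subcell orderings} so that for every pair of subcells of $C$ some ordering places them consecutively (or with only near-neighbours in between). That combinatorial family of orderings is the missing idea, and it is the technical heart of the CHPJ construction -- one fixed DFS order per shifted grid cannot deliver the ``between'' guarantee in (ii) and (iii).
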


The above theorem ensures that it is enough to maintain only a ``few''
linear orderings, and for any pair of points $p,q \in \PS$ there
exists an ordering where all points that lie between $p$ and $q$ are
either very close to $p$ or $q$. It is natural to build the
one-dimensional construction for each of these orderings with some
carefully chosen parameter. Then, since there is a reliable path in
the one-dimensional construction, there is an edge $p'q'$ along the
path between $p$ and $q$ that connects the locality of $p$ and the
locality of $q$. We fix the edge $p'q'$ and apply recursion on the
subpaths from $p$ to $p'$ and $q$ to $q'$.

\subsection{Construction}
\seclab{higher_dim_constr}%
Let $\epsR,\eps \in \pth{0,1}$ be fixed parameters and
$\PS \subseteq [0,1)^d$ be a set of $n$ points. Set
$\epsB = \eps/16$ in \thmref{lso} and let
$\ordAll=\ordAll(\epsB)$ be the set of $M=M\pth{\epsB}$ orderings that
fulfills the conditions of the theorem. We define
$\epsR' = \frac{\epsR}{3MN}$, where $N=\ceil{\log\!\log n}$. Now, for
each ordering $\order \in \ordAll$, we build $N$ independent spanners
$\Graph^1_\order,\dots,\Graph^N_\order$, using the construction in
\secref{one_dim_constr} with parameter $\epsR'$. The (random) graph $\Graph$ is
defined as the union of graphs $\Graph^i_\order$ for all
$\order \in \ordAll$ and $i\in \IRX{N}$, that is,
$\Edges\pth{\Graph} = \cup_{\order\in\ordAll, i \in \IRX{N}} \Edges\pth{\Graph^i_\order}$.

\subsection{Analysis}
\seclab{higher_dim_anal}%

\begin{lemma}
    \lemlab{size-d-dim}%
    The graph $\Graph$, constructed above, has
    $\Of\pth{\constA \, n \log\!\log^2\!n \log\!\log\!\log n}$ edges, where the
    $\Of$ hides constant that depends on the dimension $d$, and
    \begin{math}
        \constA%
        =%
        \Of(\eps^{-2d}\epsR^{-1} \log^3 \eps^{-1}\log \epsR^{-1}).
    \end{math}
\end{lemma}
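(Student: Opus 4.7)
The plan is pure bookkeeping. By construction, $\Graph$ is the union of exactly $M N$ independent one-dimensional spanners $\Graph^i_\order$, one per pair $(\order,i) \in \ordAll \times \IRX{N}$, each built with reliability parameter $\epsR'$. I would invoke \lemref{size-1d} on each copy to obtain $\Of\pth{n (\epsR')^{-1} \log (\epsR')^{-1}}$ edges per copy, and then multiply by $MN$.

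To evaluate the resulting expression I would substitute the values fixed in \secref{higher_dim_constr}: by \thmref{lso} with $\epsB = \eps/32$ we have $M = \Of(\eps^{-d} \log \eps^{-1})$, by definition $N = \ceil{\log\log n}$, and $\epsR' = \epsR/(3MN)$. These give
\begin{equation*}
    (\epsR')^{-1} = \Of\pth{\epsR^{-1} \eps^{-d} \log \eps^{-1} \log\log n}
    \quad\text{and}\quad
    \log(\epsR')^{-1} = \Of\pth{\log \epsR^{-1} + \log \eps^{-1} + \log\log\log n}.
\end{equation*}

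Multiplying the three factors $MN$, $(\epsR')^{-1}$ and $\log(\epsR')^{-1}$, each of $\eps^{-d}\log\eps^{-1}$ and $\log\log n$ is picked up twice, once from $MN$ and once from $(\epsR')^{-1}$, producing the combined $\eps^{-2d} \log^2 \eps^{-1}$ and $\log\log^2 n$ contributions. The remaining $\log(\epsR')^{-1}$ factor then folds into $\log^3 \eps^{-1} \log \epsR^{-1}$ together with the stand-alone $\log\log\log n$ factor that ends up outside $\constA$, matching the claimed bound with $\constA = \Of(\eps^{-2d} \epsR^{-1} \log^3 \eps^{-1} \log \epsR^{-1})$.

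The only place to be careful is that $\log(\epsR')^{-1}$ is a sum of three logarithms rather than a product, so the grouping into the stated $\constA$ must absorb each summand into the appropriate already-present factor (the $\log \epsR^{-1}$ summand into $\log \epsR^{-1}$, the $\log \eps^{-1}$ summand into one of the two copies of $\log \eps^{-1}$, and $\log\log\log n$ kept separate). There is no analytical content beyond this grouping.
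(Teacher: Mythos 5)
Your proposal is correct and takes essentially the same route as the paper: invoke \lemref{size-1d} per copy, multiply by $MN$, substitute $M$, $N$, $\epsR'$, and bound the resulting sum $\log\epsR^{-1}+\Of(\log\eps^{-1})+\log\log\log n$ by the product $\Of\pth{\log\eps^{-1}\log\epsR^{-1}\log\log\log n}$. The paper carries out exactly this bookkeeping, expanding $N^2 M^2 \epsR^{-1}\pth{\log\epsR^{-1}+\log N+\log M}$ and collecting terms into the stated $\constA$.
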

\begin{proof}
    There are $M = \Of\pth{\eps^{-d} \log\eps^{-1}}$ orderings,
    and for each ordering there are $N$ copies, for which we build the
    one-dimensional construction with parameter $\epsR'$. The size of
    the one-dimensional construction is
    $\Of\pth{n \cdot \epsR'^{-1} \cdot \log\epsR'^{-1}}$, by
    \lemref{size-1d}. Therefore, $\Graph$ has size
    \begin{align*}
      \cardin{\Edges\pth{\Graph}}
      &=
        \cardin{\cup_{\order\in\ordAll,i\in\IRX{N}} \Edges\pth{\Graph^i_\order}}
        \leq
        \sum_{\order\in\ordAll,i\in\IRX{N}} \cardin{\Edges\pth{\Graph'_\order}}
      \leq
      N M \cdot \Of\pth{n \cdot \epsR'^{-1} \cdot \log\epsR'^{-1}}%
      \\
      &
        =
        \Of\pth{n \cdot N^2 M^2 \epsR^{-1} \cdot \pth{\log\epsR^{-1}
        + \log N + \log M }} \\
      &=
        \Of\bigl(n \log\!\log^2\!n \cdot
        \eps^{-2d}\log^2\eps^{-1}
        \cdot \epsR^{-1} (\log\epsR^{-1} +
        \log\!\log\!\log n +
        d\log
        \eps^{-1} +\log\!\log \eps^{-1} )\bigr) \\
      &=
        \Of\pth{ \constA \, n \log\!\log^2\!n \log\!\log\!\log n},
    \end{align*}
    where $\constA=\Of(\eps^{-2d}\epsR^{-1} \log^3 \eps^{-1}\log \epsR^{-1})$.
\end{proof}

Fix an attack set $\BSet \subseteq \PS$. In order to bound
$\lossY{\Graph}{\BSet}$ in expectation, we define a sequence of sets
$\BSet_0 \subseteq \BSet_1 \subseteq \dots \subseteq \BSet_N$ as follows. First, we set $\BSet_0 = \BSet$. Then, for
$i=1,\dots,N$, we define $\BSet_i^\order$ for each $\order\in\ordAll$
to contain all points that do not have a right or left stairway in
$\Graph^i_\order$ that is safe and usable with respect to
$\BSet_{i-1}$, that is, $\BSet_i^\order$ contains the bad points with
respect to $\BSet_{i-1}$. We set
$\BSet_i = \cup_{\order\in\ordAll} \BSet_i^\order$. Our goal is to
show that the expected size of $\BSet_N$ is small, and there is a
$(1+\eps)$-path for all pairs of points outside of $\BSet_N$.

\begin{lemma}
    \lemlab{bad-iter} Let $\BSet$ be an oblivious attack and let
    $\BSet_0 \subseteq \BSet_1 \subseteq \dots \subseteq \BSet_N$ be
    the sequence defined above. Then, for $i=1,\dots,N$, we have
    $\Ex{\cardin{\BSet_i^\order} \; | \; \BSet_{i-1}} \leq
    \pth{1+\epsR'}\cardin{\BSet_{i-1}}$, for all $\order \in \ordAll$.
\end{lemma}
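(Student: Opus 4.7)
The plan is to reduce this statement directly to the one-dimensional reliability bound of \lemref{bad-points}, by exploiting the independence of the random graphs $\Graph^i_\order$ across the indices $i$. Concretely, $\BSet_i^\order$ is defined so that (after identifying the points of $\PS$ with $\IRX{n}$ via $\order$) it coincides with the bad set $\compX{\StwSet}$ for the one-dimensional graph $\Graph^i_\order$ when attacked by $\BSet_{i-1}$. So if I can argue that $\BSet_{i-1}$ serves as an oblivious attack on $\Graph^i_\order$, then \lemref{bad-points} applied with parameter $\epsR'$ gives exactly the bound claimed (note $\epsR' = \epsR/(3MN) < 1/2$, so the hypothesis of that lemma is satisfied).

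First I would spell out the independence. Since $\Graph^i_\order$ is built by the one-dimensional construction with fresh random bits for each pair $(\order, i)$, it is independent of every other $\Graph^j_{\order'}$, and in particular of $\Graph^1_\cdot, \dots, \Graph^{i-1}_\cdot$. The random set $\BSet_{i-1}$ is a deterministic function of the original oblivious attack $\BSet$ and of the graphs $\Graph^j_\cdot$ for $j \leq i-1$; hence $\BSet_{i-1}$ is independent of $\Graph^i_\order$.

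Next, I would condition on the value of $\BSet_{i-1}$. Under this conditioning, $\BSet_{i-1}$ is a fixed subset of $\PS$, and by the independence above it was chosen without any knowledge of $\EdgesX{\Graph^i_\order}$. Therefore $\BSet_{i-1}$ is an oblivious attack on the one-dimensional spanner $\Graph^i_\order$. Invoking \lemref{bad-points} for $\Graph^i_\order$ with attack $\BSet_{i-1}$ and reliability parameter $\epsR'$ yields
\begin{equation*}
    \Ex{\cardin{\BSet_i^\order} \;\middle|\; \BSet_{i-1}}
    \;\leq\; (1+\epsR')\,\cardin{\BSet_{i-1}},
\end{equation*}
which is the stated bound.

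The only subtle point, and the reason the construction uses $N$ fresh independent copies per ordering rather than reusing the same spanner, is precisely to make $\BSet_{i-1}$ oblivious with respect to $\Graph^i_\order$; if the same graph were reused at levels $i$ and $i-1$, then $\BSet_{i-1}$ would depend on $\EdgesX{\Graph^i_\order}$ and the one-dimensional lemma could not be applied. Once this independence is isolated, the proof is immediate from \lemref{bad-points}.
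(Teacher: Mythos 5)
Your proof is correct and takes essentially the same approach as the paper's: observe that $\BSet_{i-1}$ depends only on the graphs $\Graph^j_\cdot$ for $j\leq i-1$, so by the independence of the fresh copies it is an oblivious attack on $\Graph^i_\order$, and then apply \lemref{bad-points} with parameter $\epsR'$. You spell out the independence argument and the hypothesis check ($\epsR'<1/2$) a bit more explicitly than the paper, but the underlying idea is identical.
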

\begin{proof}
    The set $\BSet_{i-1}$ has information only about
    graphs $\Graph^j_\order$ for $j\leq i-1$. Thus, the attack
    $\BSet_{i-1}$ on the graph $\Graph^i_\order$ is oblivious and we
    have
    $\Ex{\cardin{\BSet_i^\order} \; | \; \BSet_{i-1}} \leq
    \pth{1+\epsR'}\cardin{\BSet_{i-1}}$ by~\lemref{bad-points}.%
\end{proof}

\begin{lemma}
    \lemlab{bad-points-d-dim} Let $\BSet_N$ be the set defined
    above. For any oblivious attack $\BSet$, the expected size of
    $\BSet_N$ is at most $\pth{1+\epsR}\cdot \cardin{\BSet}$.
\end{lemma}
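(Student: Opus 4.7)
The plan is to iterate the one-dimensional expected-loss bound once per round, telescoping the multiplicative blowup across the $N$ iterations and $M$ orderings. First I would establish that the sequence $\BSet_0 \subseteq \BSet_1 \subseteq \cdots \subseteq \BSet_N$ is genuinely increasing: every point of $\BSet_{i-1}$ is, by definition, bad for the attack $\BSet_{i-1}$ on $\Graph^i_\order$, so $\BSet_{i-1} \subseteq \BSet_i^\order$ and therefore $\BSet_{i-1} \subseteq \BSet_i$. Consequently \lemref{bad-iter} can be rewritten as a bound on the new bad points,
\begin{equation*}
    \Ex{\cardin{\BSet_i^\order \setminus \BSet_{i-1}} \;\middle|\; \BSet_{i-1}} \leq \epsR' \cardin{\BSet_{i-1}}.
\end{equation*}

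Next, I would aggregate across the $M$ orderings via a union bound on set sizes, $\cardin{\BSet_i \setminus \BSet_{i-1}} \leq \sum_{\order \in \ordAll} \cardin{\BSet_i^\order \setminus \BSet_{i-1}}$, to get
\begin{equation*}
    \Ex{\cardin{\BSet_i} \;\middle|\; \BSet_{i-1}} \leq (1 + M\epsR') \cardin{\BSet_{i-1}}.
\end{equation*}
Taking expectations and applying the tower rule inductively over $i = 1, \ldots, N$ then yields
$\Ex{\cardin{\BSet_N}} \leq (1 + M\epsR')^N \cardin{\BSet}$.

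Finally, substituting $\epsR' = \epsR/(3MN)$ and using $1 + x \leq e^x$ together with $e^x \leq 1 + 2x$ for $x \in [0,1]$, one obtains
\begin{equation*}
    (1 + M\epsR')^N
    =
    \Bigl(1 + \tfrac{\epsR}{3N}\Bigr)^N
    \leq
    \exp\!\bigl(\epsR/3\bigr)
    \leq
    1 + \tfrac{2\epsR}{3}
    \leq
    1 + \epsR,
\end{equation*}
for $\epsR \in (0,1/2)$, which gives the claimed inequality. The only delicate point is the appeal to \lemref{bad-iter}: it requires the attack on $\Graph^i_\order$ to be oblivious to that graph's randomness, but this is already built into the construction, since $\BSet_{i-1}$ depends only on the independently drawn graphs $\Graph^j_{\order'}$ with $j \leq i-1$, and the $N$ copies per ordering are mutually independent. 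Modulo this observation, the argument is a clean telescoping computation and no further obstacle is expected.
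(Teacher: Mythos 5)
Your proposal is correct and follows essentially the same route as the paper: condition on $\BSet_{i-1}$, apply \lemref{bad-iter} and sum over the $M$ orderings to get the factor $1+M\epsR'$, telescope over the $N$ rounds via the tower rule, and finish with $(1+\epsR/(3N))^N \leq e^{\epsR/3} \leq 1+\epsR$. The only cosmetic difference is that you make the monotonicity $\BSet_{i-1}\subseteq\BSet_i^\order$ explicit (the paper uses it implicitly in the subtraction $\cardin{\BSet_i^\order}-\cardin{\BSet_{i-1}}$), and you use $e^x\leq 1+2x$ where the paper uses $e^x\leq 1+3x$; both suffice.
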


\begin{proof}
    By \lemref{bad-iter} we have
    $\Ex{\cardin{\BSet_i^\order} \; | \; \BSet_{i-1}} \leq
    \pth{1+\epsR'}\cardin{\BSet_{i-1}}$ for all $\order \in
    \ordAll$. Therefore,
    \begin{align*}
      \ExCond{\bigl.\cardin{\BSet_i}}{ \BSet_{i-1}}
      &\leq
        \bigl(\pth{1+\epsR'}\cardin{\BSet_{i-1}} - \cardin{\BSet_{i-1}}\bigr) \cdot M + \cardin{\BSet_{i-1}}
        =
        \pth{1+\frac{\epsR}{3N}}\cardin{\BSet_{i-1}}
    \end{align*}
    holds, for $i=1,\dots,N$, which gives
    \begin{align*}
      \Ex{\cardin{\BSet_N} \bigr.}
      \leq
      \Ex{\Ex{\cardin{\BSet_N} \; | \; \BSet_{N-1}}}
      &\leq
        \pth{1+\frac{\epsR}{3N}} \cdot \Ex{\cardin{\BSet_{N-1}}} \\
        &\leq
        \pth{1+\frac{\epsR}{3N}}^N \cdot \Ex{\cardin{\BSet_0}}
        =
        \pth{1+\frac{\epsR}{3N}}^N \cdot \cardin{\BSet}.
    \end{align*}
    Using $1+x \leq e^x \leq 1+3x$, for $x\in[0,1]$, we obtain
    \begin{align*}
      \Ex{\cardin{\BSet_N}}
      &\leq
        \pth{1+\frac{\epsR}{3N}}^N \cdot \cardin{\BSet}
        \leq
        \exp{\pth{N \frac{\epsR}{3N}}} \cdot \cardin{\BSet}
        =
        e^\frac{\epsR}{3} \cdot \cardin{\BSet}
        \leq
        \pth{1+\epsR} \cdot \cardin{\BSet}.
        \qedhere
    \end{align*}
\end{proof}

\begin{lemma}
    \lemlab{1+eps-path}%
    Let $\BSet_N$ be the set defined above. Then, for any two points
    $p,q\in \PS\setminus\BSet_N$, there is a $(1+\eps)$-path in the
    graph $\Graph \setminus \BSet$.
\end{lemma}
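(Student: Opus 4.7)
The plan is to prove the lemma by induction on the level index $i$, with the following strengthened claim: for every $p,q\in\PS\setminus\BSet_i$ with $\dY{p}{q}=\ell$, the graph $\Graph\setminus\BSet$ contains a path from $p$ to $q$ of geometric length at most $(1+\alpha_i)\ell$, where $\alpha_0 = 0$ and $\alpha_i = 2\epsB(1+\alpha_{i-1})+2\epsB$. Since $\epsB=\eps/32$, standard manipulation of the geometric recurrence gives $\alpha_N \leq 4\epsB/(1-2\epsB) \leq \eps$, which is the bound the lemma asks for.

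For the inductive step, I would apply \thmref{lso} with parameter $\epsB$ to the pair $(p,q)$, obtaining an ordering $\order\in\ordAll$ and separator $\pz$ for which $p \prec_\order q$, $(p,\pz)_\order\subseteq\ballY{p}{\epsB\ell}$ and $(\pz,q)_\order\subseteq\ballY{q}{\epsB\ell}$. Because $p,q\notin\BSet_i\supseteq\BSet_i^\order$, both endpoints possess safe usable stairways in $\Graph_\order^i$ with respect to $\BSet_{i-1}$. The set $\BSet_{i-1}$ is determined by $\Graph_\order^1,\dots,\Graph_\order^{i-1}$ (across all orderings), so it is oblivious with respect to the independently constructed $\Graph_\order^i$, and \lemref{mon-path} produces a monotone path $\pi$ from $p$ to $q$ in $\Graph_\order^i\setminus\BSet_{i-1}$. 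I pick the unique edge $p'q'$ of $\pi$ that straddles $\pz$ in $\order$. Then $p',q'\notin\BSet_{i-1}$ and, by the LSO properties, $p'\in\ballY{p}{\epsB\ell}$, $q'\in\ballY{q}{\epsB\ell}$, hence $\dY{p'}{q'}\leq(1+2\epsB)\ell$. Since $\BSet_{i-1}\subseteq\BSet_i$ we have $p\notin\BSet_{i-1}$, so both pairs $(p,p')$ and $(q',q)$ lie in $\PS\setminus\BSet_{i-1}$. Applying the induction hypothesis at level $i-1$ yields $(1+\alpha_{i-1})$-paths for these pairs in $\Graph\setminus\BSet$. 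Concatenating them with the edge $p'q'$ (which is in $\EdgesX{\Graph_\order^i}\subseteq\EdgesX{\Graph}$ and, since $p',q'\notin\BSet_{i-1}\supseteq\BSet_0=\BSet$, survives in $\Graph\setminus\BSet$) gives a path of length at most
\[
(1+\alpha_{i-1})\epsB\ell + (1+2\epsB)\ell + (1+\alpha_{i-1})\epsB\ell = (1+\alpha_i)\ell,
\]
closing the induction.

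The main obstacle is the base of the induction: at $i=0$ the claim says there is a length-$\ell$ path in $\Graph\setminus\BSet$ whenever $p,q\in\PS\setminus\BSet$, which is only automatic when $p=q$. I would handle this by arguing that the recursion naturally terminates before reaching $i=0$: whenever $\ballY{p}{\epsB\ell}\cap\PS = \{p\}$ (no other point of $\PS$ lies strictly between $p$ and $\pz$ in $\order$), the first edge of the monotone path out of $p$ already straddles $\pz$, so $p'=p$ and no recursion is needed on the $(p,p')$ side; the symmetric argument handles $(q',q)$. Combining this with the fact that distances shrink by a factor $\epsB$ at each level, a spread/pigeonhole argument on the $n$ points of $\PS$ shows that within the $N=\ceil{\log\log n}$ levels available the recursion actually bottoms out (either at $p=q$ or at an edge that is directly an edge of $\Graph\setminus\BSet$). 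Making this termination argument precise — matching the geometric shrinkage against the number of available levels $N$ — is the one non-routine step; everything else is bookkeeping of the stretch recurrence above.
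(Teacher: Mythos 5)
Your recursive skeleton is the paper's: apply the LSO theorem, walk the monotone stairway path in $\Graph^i_\order$, extract the unique crossing edge $p'q'$, and recurse on $(p,p')$ and $(q',q)$ one level down; the stretch recurrence $\alpha_i = 2\epsB\alpha_{i-1} + 4\epsB$ is fine and sums to $O(\epsB)\leq\eps$. The gap is the base case, and your proposed repair does not work. You claim the recursion bottoms out within $N=\ceil{\log\log n}$ rounds via ``distances shrink by a factor $\epsB$ at each level'' plus a pigeonhole on $n$ points. But after $N$ rounds the scale has only shrunk by $(2\epsB)^{\ceil{\log\log n}}\approx 1/\mathrm{polylog}(n)$, and nothing forces a ball of that radius to be empty of $\PS$: the point set can have exponential spread, and already for $n$ equally spaced points the ball $\ballY{x}{\epsB(2\epsB)^k\ell}$ still contains $n^{1-o(1)}$ points for typical starting pairs. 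Bottoming out would require $\Theta(\log n)$ (in fact $\Theta(\log\text{spread})$) rounds, whereas the construction supplies only $\ceil{\log\log n}$ copies per ordering. No pigeonhole on $n$ points can repair this, since $n$ points alone place no bound on the spread.

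The paper closes the proof by \emph{not} attempting to bottom out. After $N-1$ rounds of recursion, each of the at most $2^{N-1}$ remaining pairs is connected by appending the \emph{entire} monotone stairway path in $\Graph^1_\order\setminus\BSet$ (this path avoids $\BSet_0=\BSet$, so it survives). A stairway path has at most $2\log n$ edges, each of length $O(\epsB)$ times the current scale $(2\epsB)^{N-1}\dY{p}{q}$, so the total length contributed by this last layer is on the order of $2^{N-1}\cdot 2\log n\cdot\epsB\cdot(2\epsB)^{N-1}\dY{p}{q}=O\bigl((4\epsB)^N\log n\bigr)\dY{p}{q}$. The value $N=\ceil{\log\log n}$ is engineered precisely so that $(4\epsB)^N\log n=(\eps/8)^N\log n\leq\eps/4$: the factor $(1/2)^{\log\log n}=1/\log n$ hidden inside $(\eps/8)^N$ cancels the $\log n$ hop bound. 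This trade-off between the per-level geometric decay $(4\epsB)$ and the $O(\log n)$ hop count of a stairway is the one non-routine step of the proof; your writeup omits it, and without it the induction has no valid base.
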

\begin{proof}
    The proof is essentially the same as the proof
    of Theorem~4.3 in \cite{bho-spda-20}.

    Let $p,q \in \PS \setminus \BSet_N$ be fixed.  According to
    \thmref{lso}, there is an ordering $\order\in\ordAll$, such that
    all the points $z\in \pth{p,q}_\order$ lie in one of the balls of
    radius $\epsB \dY{p}{q}$ around $p$ and $q$. Recall that the graph
    $\Graph$ contains $\Graph^N_\order$ as a subgraph. Since
    $p,q \notin \BSet_N$ and $\Graph^N_\order$ is reliable, there is a
    path connecting $p$ and $q$ that is monotone with respect to
    $\order$ and avoids any point in $\BSet_{N-1}$ by
    \thmref{1-d}. Therefore, there is a unique edge $p'q'$ along
    this path such that $p'$ is in the close neighborhood of $p$ and
    $q'$ is in the close neighborhood of $q$. Furthermore, we also
    have that $p',q' \in \PS \setminus \BSet_{N-1}$. We fix the edge
    $p'q'$ in path $\pi$ and find subpaths between the pairs
    $pp'$ and $qq'$ in a recursive manner. The bounds on the
    distances are %
    \begin{enumerate}[label=(\roman*)]
        \item
        \begin{math}
            \dY{p'}{q'}%
            \leq%
            \pth{1 + 2\epsB}\dY{p}{q},
        \end{math}

        \item $\dY{p}{p'} \leq \epsB\dY{p}{q}$ and similarly
        $\dY{q}{q'} \leq \epsB\dY{p}{q}$.
    \end{enumerate}
    We repeat this process $N-1$ times. Let $Q_i$ be the set of pairs
    that needs to be connected in the $i$\th round, that is,
    $Q_0=\{pq\}, Q_1=\{pp',qq'\}$ and so on. There are
    at most $2^i$ pairs in $Q_i$ and for any pair $xy\in Q_i$ we
    have $x,y \in \PS\setminus \BSet_{N-i}$. For each pair
    $xy\in Q_i$, there is an ordering $\order$ such that the
    argument above can be repeated. That is, there is a monotone path
    in the graph $\Graph^{N-i}_\order \setminus \BSet_{N-i-1}$
    according to $\order$ and there is an edge $x'y'$ along this
    path such that %
    \begin{enumerate}[label=(\roman*)]
        \item
        \begin{math}
            \dY{x'}{y'}%
            \leq%
            \pth{1 + 2\epsB}\dY{x}{y}%
            \leq%
            \pth{1 + 2\epsB}\epsB^{i}\dY{p}{q},
        \end{math}

        \item
        $\dY{x}{x'} \leq \epsB\dY{x}{y} \leq
        \epsB^{i+1}\dY{p}{q}$ and similarly
        $\dY{y}{y'} \leq \epsB^{i+1}\dY{p}{q}$.
    \end{enumerate}
    The edge $x'y'$ is added to path $\pi$ and the pairs $xx'$ and
    $yy'$ are added to $Q_{i+1}$, unless they are trivial (i.e.,
    $x=x'$ or $y=y'$). After $N-1$ rounds, $Q_{N-1}$ is the set of
    active pairs that still needs to be connected. Notice that
    $x,y\in \PS\setminus \BSet_1$ holds for any pair $xy\in
    Q_{N-1}$. Again, for each pair in $Q_{N-1}$, we apply \thmref{lso}
    and \thmref{1-d} to obtain a monotone path according to some
    ordering $\order$ in the graph $\Graph^1_\order$. None of these
    paths use any points in $\BSet$. In order to complete the path
    $\pi$ we add the whole paths obtained in the last step. It is not
    hard to see that the number of edges of each of the paths added in
    the last step is at most $2\log n$. Indeed, it is clear from the
    analysis of our one-dimensional construction that a path using the
    stairways can have at most two points per level. Since the number
    of levels in the construction is fewer than $\log n$, we get the
    bound $2\log n$.

    Now, that we have a path $\pi$ that connects the points $p$ and
    $q$ without using any points in the failed set $\BSet$, we give an
    upper bound on the length of $\pi$. First, we calculate the total
    length added in the last step. There are
    $\cardin{Q_{N-1}} \leq 2^{N-1}$ pairs in the last step and for
    each pair $xy \in Q_{N-1}$ we have
    $\dY{x}{y} \leq \dY{p}{q}\epsB^{N-1}$. Thus, we obtain
    \begin{align*}
      \sum_{\{x,y\} \in Q_{N-1}} & \text{length} \pth{\pi[x,y]}
      \leq
        2^{N-1} \pth{\pth{1+2\epsB} \dY{p}{q} \epsB^{N-1} + 2\log n \dY{p}{q}\epsB^N } \\
      &\leq
        2 \cdot 2\epsB \dY{p}{q} + \pth{2\epsB}^N \log n \dY{p}{q}
        =
        \pth{4\epsB + \pth{2\epsB}^N \log n} \dY{p}{q} \\
      &\leq
        \pth{\frac{\eps}{4} + \pth{\frac{\eps}{8}}^{\log\!\log n} \log n} \dY{p}{q} \\
      &\leq
        \pth{\frac{\eps}{4} + \frac{\eps}{4} \cdot \pth{\frac{1}{2}}^{\log\!\log n} \log n} \dY{p}{q}
        =
        \frac{\eps}{2} \dY{p}{q},
    \end{align*}
    where we simply use $2\epsB \leq 1$ in the second line and
    $\epsB=\eps/16$ and $N=\ceil{\log\!\log n}$ in the third
    line.  Second, we bound the total length of the edges that were
    added to path $\pi$ in any round except the last. This contributes
    at most
    \begin{align*}
      \sum_{i=0}^{N-2} 2^i \cdot \pth{1+2\epsB} \epsB^i \dY{p}{q}
      &\leq
        \pth{1+2\epsB} \dY{p}{q} \cdot \sum_{i=0}^{\infty}\pth{2\epsB}^i
        =
        \pth{1+2\epsB} \dY{p}{q} \cdot \frac{1}{1-2\epsB} \\
      &=
        \pth{1+ \frac{4\epsB}{1-2\epsB}} \dY{p}{q}
        =
        \pth{1+ \frac{\eps/4}{1-\eps/8}} \dY{p}{q} \\
        &\leq
        \pth{1+\frac{\eps}{2}} \dY{p}{q}
    \end{align*}
    to the length of $\pi$. Therefore the total length of the path
    $\pi$ connecting $p$ and $q$, without using any points of $\BSet$,
    is at most $\pth{1+\eps}\dY{p}{q}$.
\end{proof}

\begin{theorem}
    \thmlab{d-dim}%
    Let $\epsR,\eps \in \pth{0,1}$ be fixed parameters and
    $\PS\subseteq [0,1)^d$ be a set of $n$ points. The graph $\Graph$,
    constructed in \secref{higher_dim_constr}, is a $\epsR$-reliable
    $\pth{1+\eps}$-spanner of $\PS$ in expectation and has size
    \begin{math}
        \Of\bigl( \constA \, n \log\!\log^2\!n \log\!\log\!\log n \bigr),
    \end{math}
    where $\Of$ hides constant that depends on the dimension $d$, and
    \begin{math}
        \constA%
        =%
        \Of(\eps^{-2d}\epsR^{-1} \log^3 \eps^{-1}\log \epsR^{-1}).
    \end{math}
\end{theorem}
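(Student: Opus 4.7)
The plan is to combine the three key lemmas proved in Section~\secref{higher_dim_anal} in essentially a bookkeeping argument, so most of the work has already been done. The theorem has two separate claims: the size bound, and that the graph is an $\epsR$-reliable $(1+\eps)$-spanner in expectation.

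For the size bound, this is immediate from \lemref{size-d-dim}, which already states exactly the claimed bound
\begin{math}
\Of(\constA \, n \log\log^2 n \log\log\log n)
\end{math}
with $\constA=O(\eps^{-2d}\epsR^{-1} \log^3 \eps^{-1}\log \epsR^{-1})$, so no further work is needed on this front.

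For the reliability claim, fix an arbitrary oblivious attack $\BSet \subseteq \PS$. The plan is to use the nested sequence $\BSet_0 \subseteq \BSet_1 \subseteq \cdots \subseteq \BSet_N$ constructed in \secref{higher_dim_anal} and take the damaged set to be $\EBSet = \BSet_N$. Two things then need to be verified: (i) every pair of points outside $\EBSet$ admits a short path avoiding $\BSet$, and (ii) $\EBSet$ is small in expectation. Part (i) is exactly the content of \lemref{1+eps-path}: for any $p,q \in \PS \setminus \BSet_N$, there is a $(1+\eps)$-path in $\Graph \setminus \BSet$, so $\EBSet = \BSet_N$ is a valid choice of damaged set for the $(1+\eps)$-spanner property. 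Part (ii) is exactly the content of \lemref{bad-points-d-dim}, which bounds $\Ex{\cardin{\BSet_N}} \leq (1+\epsR)\cardin{\BSet}$.

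Combining these, since $\BSet_0 = \BSet \subseteq \BSet_N$, the loss satisfies
\begin{equation*}
\Ex{\cardin{\EBSet \setminus \BSet}}
= \Ex{\cardin{\BSet_N}} - \cardin{\BSet}
\leq (1+\epsR)\cardin{\BSet} - \cardin{\BSet}
= \epsR \cdot \cardin{\BSet},
\end{equation*}
so dividing by $\cardin{\BSet}$ gives $\Ex{\lossY{\Graph}{\BSet}} \leq \epsR$, as required. There is no genuine obstacle here since the hard work -- showing that obliviousness is preserved through the $N$ rounds of conditioning (\lemref{bad-iter}), that the multiplicative error compounds to only $(1+\epsR)$ over all rounds and all orderings (\lemref{bad-points-d-dim}), and that the recursive locally-sorted-order argument yields stretch $1+\eps$ (\lemref{1+eps-path}) -- has already been carried out. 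The only subtle point to emphasize is that the choice $\epsR' = \epsR/(3MN)$ in the construction is precisely calibrated so that $M$ orderings over $N$ levels amplify the per-spanner expected blowup to at most the target $(1+\epsR)$ factor, which is what makes the two lemmas combine cleanly.
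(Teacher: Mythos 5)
Your proof is correct and matches the paper's own argument: both cite \lemref{size-d-dim} for the size bound, use $\BSet_N$ as the damaged set, and combine \lemref{1+eps-path} with \lemref{bad-points-d-dim} to conclude $\Ex{\lossY{\Graph}{\BSet}} \leq \epsR$. The only (harmless) elaboration you add is spelling out that $\BSet \subseteq \BSet_N$ makes $\Ex{\cardin{\BSet_N \setminus \BSet}} = \Ex{\cardin{\BSet_N}} - \cardin{\BSet}$, and that $\BSet_N$ upper bounds the minimal damaged set, which the paper leaves implicit.
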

\begin{proof}
    The size of the construction is proved in \lemref{size-d-dim}. Let
    $\BSet_N$ be the set defined above. By \lemref{bad-points-d-dim},
    the expected size of $\BSet_N$ is at most
    $\pth{1+\epsR}\cardin{\BSet}$. By \lemref{1+eps-path}, for any two
    points $p,q\in \PS \setminus \BSet_N$, there is a $(1+\eps)$-path
    between $p$ and $q$ in the graph $\Graph\setminus\BSet$. Thus, we
    have $\Ex{\lossY{\Graph}{\BSet}} \leq \epsR$.
\end{proof}

\subsection{Probabilistic bound}
\seclab{d-dim-analysis-prob}

The same construction, as we used in \secref{higher_dim_constr}, can be
applied to construct spanners with near linear edges that are reliable
with probability $1-\epsP$. The idea is to use the probabilistic version of the one-dimensional construction with parameters $\epsP' = \frac{\epsP}{MN}$ and $\epsR' = \frac{\epsR}{3MN}$. Then, similarly to \lemref{bad-points-d-dim}, it is not hard to show that $\cardin{\BSet_N} \leq \pth{1+\epsR} \cardin{\BSet}$ holds with probability $1-\epsP$.

\begin{lemma}
    \lemlab{bad-points-d-dim-prob} Let $\BSet_N$ be the set defined
    in \secref{higher_dim_anal}. The probability that the size of $\BSet_N$ is larger than
    $\pth{1+\epsR}\cdot \cardin{\BSet}$ is at most $\epsP$.
\end{lemma}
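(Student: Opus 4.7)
The plan is to mirror the structure of \lemref{bad-points-d-dim}, but to replace the expectation bound (\lemref{bad-points}) by its probabilistic counterpart (\lemref{bad-points-prob}), and then control the overall failure probability with a union bound over all rounds and all orderings. Recall that in the construction referenced here, each $\Graph^i_\order$ is built by the one-dimensional construction of \secref{1-d-analysis-prob} with reliability parameter $\epsR' = \epsR/(3MN)$ and failure probability $\epsP' = \epsP/(MN)$.

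The first step is to verify obliviousness at each level. As in \lemref{bad-iter}, the set $\BSet_{i-1}$ is a function only of the original attack $\BSet$ and of the graphs $\Graph^j_{\order'}$ with $j \leq i-1$, all of which are constructed independently of $\Graph^i_\order$. Hence, conditioned on the value of $\BSet_{i-1}$, this set behaves as a fixed (oblivious) attack on $\Graph^i_\order$, and \lemref{bad-points-prob} applies to give
\begin{equation*}
    \Prob{\cardin{\BSet_i^\order} \leq (1+\epsR')\cardin{\BSet_{i-1}} \;\Big|\; \BSet_{i-1}} \geq 1-\epsP'.
\end{equation*}

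Next, I would apply a union bound over the $NM$ pairs $(i,\order) \in \IRX{N} \times \ordAll$. With probability at least $1 - NM \epsP' = 1-\epsP$, the inequality $\cardin{\BSet_i^\order} \leq (1+\epsR')\cardin{\BSet_{i-1}}$ holds simultaneously for every round $i$ and every ordering $\order$. On this success event, the deterministic chaining step in the proof of \lemref{bad-points-d-dim} goes through verbatim:
\begin{equation*}
    \cardin{\BSet_i}
    \leq
    \cardin{\BSet_{i-1}} + M \pth{(1+\epsR')\cardin{\BSet_{i-1}} - \cardin{\BSet_{i-1}}}
    =
    \pth{1+\frac{\epsR}{3N}} \cardin{\BSet_{i-1}},
\end{equation*}
and iterating over $i=1,\dots,N$ together with $1+x \leq e^x$ yields $\cardin{\BSet_N} \leq \pth{1+\epsR/(3N)}^N \cardin{\BSet} \leq e^{\epsR/3}\cardin{\BSet} \leq (1+\epsR)\cardin{\BSet}$.

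The argument is essentially routine once \lemref{bad-points-prob} is available. The only delicate point is the conditional obliviousness of $\BSet_{i-1}$ with respect to $\Graph^i_\order$, which is exactly the observation already used in \lemref{bad-iter}, and which explains why the $N$ independent copies per ordering are needed in the construction in the first place. Everything else is a union bound combined with the same telescoping product used in the expectation proof.
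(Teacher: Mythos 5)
Your proof is correct and takes essentially the same approach as the paper: it invokes \lemref{bad-points-prob} per round and ordering under conditional obliviousness, applies a union bound over the $NM$ pairs $(i,\order)$, and uses the telescoping inequality $(1+\epsR/(3N))^N \leq 1+\epsR$. The only cosmetic difference is that you pass to the ``all rounds succeed'' event up front and then chain deterministically, whereas the paper unions over rounds $i$ with a pigeonhole argument over orderings $\order$ inside each round; the probability calculus is identical.
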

\begin{proof}
    By \lemref{bad-points-prob}, and since all attacks are oblivious,
    we have
    $$\Prob{\cardin{\BSet_i^\order} > (1+\epsR') \cardin{\BSet_{i-1}}}
    \leq \epsP'$$ for all $\order \in \ordAll$ and $i\geq
    1$. Therefore,
    \begin{align*}
      \Prob{\cardin{\BSet_i} > \pth{1+M\epsR'}\cardin{\BSet_{i-1}}}
      &=
        \Prob{\cardin{\BSet_i \setminus \BSet_{i-1}} > M\epsR'\cardin{\BSet_{i-1}}} \\
        &\leq
        \Prob{\cup_{\order \in \ordAll} \cardin{\BSet_i^\order \setminus \BSet_{i-1}} > \epsR'\cardin{\BSet_{i-1}}} \\
      &\leq
        \sum_{\order \in \ordAll} \Prob{\cardin{\BSet_i^\order \setminus \BSet_{i-1}} > \epsR'\cardin{\BSet_{i-1}}} \\
        &=
        \sum_{\order \in \ordAll} \Prob{\cardin{\BSet_i^\order} > (1+\epsR')\cardin{\BSet_{i-1}}}
        \leq
        M\epsP'
    \end{align*}
    holds for $i=1,\dots,N$. Since
    $\pth{1+\frac{\epsR}{3N}}^N \leq \pth{e^{{\epsR}/{3N}}}^N \leq
    1+\epsR$, we get
    \begin{align*}
      \Prob{\cardin{\BSet_N} > (1+\epsR) \cardin{\BSet}\bigr.}
      &\leq
        \Prob{\cardin{\BSet_N} >
        \Bigl({1+\frac{\epsR}{3N}}\Bigr)^N \cardin{\BSet}} \\
      &\leq
        \Prob{\bigcup_{i=1}^N \cardin{\BSet_i} >
        \pth{1+\frac{\epsR}{3N}} \cardin{\BSet_{i-1}}} \\
      &\leq
        \sum_{i=1}^N \Prob{\cardin{\BSet_i} >
        \pth{1+\frac{\epsR}{3N}} \cardin{\BSet_{i-1}}} \\
      &=
        \sum_{i=1}^N \Prob{\cardin{\BSet_i} > \pth{1+M\epsR'}
        \cardin{\BSet_{i-1}}}
        \leq
        NM\epsP'
        =
        \epsP.
        \qedhere
    \end{align*}
\end{proof}

Therefore, using the same argument as for \thmref{d-dim}, we obtain the following result, which gives a slight improvement in the constants, compared to the trivial multiplicative factor $\Of\pth{\log\epsP^{-1}}$ by simply repeating the construction of \secref{higher_dim_constr}.

\begin{theorem}
    \thmlab{d-dim-prob} Let $\epsR,\eps,\epsP \in \pth{0,1}$ be fixed
    parameters and $\PS\subseteq [0,1)^d$ be a set of $n$ points. The
    graph described above is a
    $\epsR$-reliable $\pth{1+\eps}$-spanner of $\PS$ with probability
    $1-\epsP$. %
    The size of the construction is
    \begin{math}
        \Of\bigl( \constA \, n \log\!\log^2\!n \log\!\log\!\log n \bigr),
    \end{math}
    where $\Of$ hides constant that depends on the dimension $d$, and
    \begin{math}
        \constA%
        =%
        \Of\bigl(\eps^{-2d}\epsR^{-1} \log^3 \eps^{-1} (\log \epsR^{-1}
        + \log \epsP^{-1})\bigr).
    \end{math}
\end{theorem}

\section{Conclusions}
\seclab{conclusion}

Reliable spanners require $\Omega (n \log n)$ edges. In this paper, we
showed that fewer edges are sufficient, if the spanner only has to be
reliable against oblivious attacks (in expectation or with a certain
probability). Our new construction avoids the use of expanders, and as
a result has much smaller constants than previous constructions,
making it potentially practical. The number of edges in the new
spanner is significantly smaller -- it is linear in one dimension, and
roughly $\Of(n \log\!\log^2\!n )$ in higher dimensions.  An open problem
is whether these $\log\!\log$-factors in higher dimensions can be
avoided. Furthermore, similar results for reliable spanners for general
metrics would be of interest.

\BibTexMode{%
   \bibliographystyle{alpha}%
   \bibliography{qreliable}%
}%
\BibLatexMode{\printbibliography}

\end{document}